\newtheorem{definition}{Definition}
\newtheorem{theorem}{Theorem}
\newtheorem{lemma}[theorem]{Lemma}
\newenvironment{customlemma}[1]
  {\innercustomlemma}
  {\endinnercustomlemma}
\newenvironment{customcor}[1]
  {\innercustomcor}
  {\endinnercustomcor}
\begin{document}
\title{Erratum: Observing a changing Hilbert-space inner product}
\author{Salini Karuvade}
\affiliation{Centre for Engineered Quantum Systems, School of Physics, 
University of Sydney, Sydney, New South Wales 2006, Australia}
\author{Abhijeet Alase }
\affiliation{Centre for Engineered Quantum Systems, School of Physics, 
University of Sydney, Sydney, New South Wales 2006, Australia}

\author{Jacob L.\ Barnett}
\affiliation{Perimeter Institute for Theoretical Physics, 31 Caroline Street North, Waterloo, Ontario N2J 2Y5, Canada}
\affiliation{Department of Physics \& Astronomy, University of Waterloo, Waterloo, Ontario
N2L 3G1, Canada}
\author{Barry C.\ Sanders}
\affiliation{Institute for Quantum Science and Technology, University of Calgary, Alberta T2N 1N4, Canada}

\maketitle

\onecolumngrid
Lemma~2 
in Appendix~A of our original paper,
which was
first presented in 1992~\cite{SGH92},
is incorrect as we explain in this erratum.
This lemma should be replaced by the following lemma and corollary.
\begin{customlemma}{2}
A bounded positive-definite operator, $\eta\in\mathcal{B}\left(\mathscr{H}\right)$, is invertible if and only if $\eta$ is surjective. 
\end{customlemma} 
\begin{customcor}{2.1}\label{cor}
Let $\eta\in\mathcal{B}\left(\mathscr{H}\right)$ 
be a surjective, positive-definite operator.
Then, $\eta^{-1}$ is bounded and positive-definite.
\end{customcor} 

\noindent Neither paper articulated a complete set of constraints required for the invertibility of the metric 
operator $\eta$.
Specifically, the inverse of a bounded positive-definite operator, $\eta\in\mathcal{B}\left(\mathscr{H}\right)$,
in the absence of the surjectivity constraint
is not necessarily bounded. 

A simple counter-example  to the original Lemma 2 is now presented.
Consider a separable Hilbert space $\mathscr{H}$ with a countably infinite orthonormal basis, $\{\ket{e_j}, j \in \mathbb{N}\}$. Consider the bounded positive-definite operator 
\begin{equation}\label{eq:cex}
    \eta: \ket{e_j} \mapsto \frac{1}{j}\ket{e_j} \quad \forall j \in \mathbb{N}.
\end{equation}
On the dense subspace of $\mathscr{H}$ spanned by finite linear combinations of 
$\{\ket{e_j}, j \in \mathbb{N}\}$ the inverse of $\eta$ is given
by 
\begin{equation}
    \eta^{-1}:\ket{e_j} \mapsto j\ket{e_j} \quad \forall j \in \mathbb{N},
\end{equation}
which is clearly not bounded
on the whole $\mathscr{H}$.

For $\eta$ to be invertible, we need to enforce an additional constraint,
namely that the range of $\eta$ coincides with $\mathscr{H}$, or in other words, $\eta$ is surjective.
Below we prove the corrected version of Lemma~2, which is stated above.
\begin{proof}
An operator is injective if and only if its kernel contains only the zero vector. With this fact in mind, a straightforward proof by contradiction demonstrates the injectivity of all positive-definite operators (if  $\ket{\psi}\in\ker(\eta) \setminus \bm{0}$, then 
$\braket{\psi|\eta|\psi}=0$, thereby violating the positive-definiteness of $\eta$). Together with the supposition that $\eta$ is surjective, we find that $\eta$ is a bijection. Thus, the bounded inverse theorem \cite[Thm.~14.5.1]{Narici2010} implies $\eta^{-1}\in\mathcal{B}\left(\mathscr{H}\right)$.
\end{proof}

Corollary~\ref{cor} now follows from
$\braket{\psi|\eta^{-1}|\psi} = (\bra{\psi}\eta^{-1})\eta(\eta^{-1}\ket{\psi}) > 0$ 
for any $\ket{\psi} \in \mathscr{H}$,
where we used the positive-definiteness of $\eta$ in the last inequality.
We remark that self-adjointness of the metric operator follows from 
its positive-definiteness by polarization identity~\cite{BB03},
and therefore need not be assumed separately as in the original 
version of Lemma~2. 
Similarly, if $\eta\in \mathcal{B}(\mathscr{H})$ is positive-definite and surjective,
then $\eta^{-1} \in \mathcal{B}(\mathscr{H})$ is positive-definite
by Corollary~\ref{cor} above, and therefore self-adjoint.

Invertibility of the metric operator also guarantees that $\mathscr{H}_\eta$ defined in
the original paper is a Hilbert space with respect to the modified inner product 
$\braket{\bullet|\bullet}_\eta$~\cite[Appendix A]{SGH92}. 
Subsequently, surjectivity of $\eta$,
in addition to boundedness and positive-definiteness, is required for defining 
the change in representation  $\mathcal{R}_\eta$ [Eq.~(2) of the original paper] and 
the inner-product changing operation $\mathcal{E}_{\eta}$Eq.~(4) of the original paper].

Our results concerning the simulation of PT-symmetric Hamiltonians hold without any
additional assumptions.
This is because we exclusively work with unbroken PT-symmetric Hamiltonians in finite dimensions. The rank-nullity theorem implies that all injective operators with finite-dimensional domains are additionally surjective. Therefore, Lemma~2 as originally stated in the paper is valid
for any finite-dimensional Hilbert space $\mathscr{H}$. 

We also note that for any unbroken PT-symmetric Hamiltonian in 
a separable, infinite dimensional Hilbert space $\mathscr{H}$, an $\eta \in \mathcal{B}(\mathscr{H})$
satisfying the quasi-Hermiticity condition 
that is both surjective and positive-definite can be constructed~\cite[Thm.~4.3]{Kar22}
(see also~\cite[Eq.~(23)]{BiOrthogonal}).
This is a consequence of the fact that unbroken PT-symmetric Hamiltonians in infinite dimensions are defined 
to have eigenvectors forming a Riesz basis, in addition to these vectors 
being invariant under the action of the PT operator~\cite{Mos10b}.

In conclusion, amendment to Lemma~2 mandates that the operations $\mathcal{R}_\eta$ [Eq.~(2) of the original paper] and $\mathcal{E}_\eta$ [Eq.~(4) of the original paper] are to be defined only under the additional constraint that $\eta$ is surjective. 
The rest of our original paper is correct and remains unchanged after these modifications.

\ 

\paragraph*{Author Contribution Statement:} JLB pointed out the error in Lemma 2 of the original paper and provided the counter example given in this erratum. SK and AA developed the correct formulation of Lemma 2. SK and AA wrote the draft of the erratum with feedback from JLB and BCS.

\pagebreak

\begin{center}
{\bf \large Observing a changing Hilbert space inner product} 

\ 

{Salini Karuvade, Abhijeet Alase, and Barry C. Sanders}

{\small \it Institute for Quantum Science and Technology, University of Calgary,\\
2500 University Drive NW, Calgary, Alberta T2N 1N4, Canada}

\ 

\setlength{\fboxsep}{0pt}%
\setlength{\fboxrule}{0pt}%

\noindent\fbox{%
    \parbox{0.8\textwidth}{%
\small        
In quantum mechanics,
physical states are represented by rays in Hilbert space~$\mathscr H$,
which is a vector space imbued by an inner product~$\langle\,|\,\rangle$,
whose physical meaning arises as the overlap~$\langle\phi|\psi\rangle$
for~$\ket\psi$ a pure state (description of preparation)
and~$\bra\phi$ a projective measurement. However,
current quantum theory does not formally address the consequences of 
a changing inner product during the interval between preparation and measurement.
We establish a theoretical framework for such a changing inner product,
which we show is consistent with standard quantum mechanics.
Furthermore, we show that this change is described by a quantum operation,
which is tomographically observable,
and we elucidate how our result is strongly related to the 
exploding topic of PT-symmetric quantum mechanics.
We explain how to realize experimentally a changing inner product for a 
qubit in terms of a qutrit protocol with a unitary channel.
    }%
}

\end{center}

\twocolumngrid
Hilbert-space inner product is fundamental to quantum mechanics (QM), and its physicality relates 
to norm through the Born interpretation and to fidelity and distinguishability through its complex angle~\cite{SN21}.
The uniqueness of the inner product associated to a quantum system has come under scrutiny 
following the advent of PT-symmetric QM.
PT-symmetric systems are described by non-Hermitian Hamiltonians invariant under the combined action of parity~(P) and time~(T) 
inversion symmetries~\cite{BB98,BBJ02,BBJ03,Ben05}, and they are predicted to exhibit novel physical phenomena which
have been simulated on a variety of experimental platforms~\cite{RME+10,SLZ+11,BDG+12,POS+14,ZZS+16,XZB+17,EMK+18,WLG+19,ZLW+20}. 
These phenomena have been explained by observing that non-Hermitian Hamiltonians with 
unbroken PT symmetry are Hermitian with respect to a different Hilbert-space inner product~\cite{BBJ02,Mos03a,MECM08,JMCN19}.
Changing Hilbert-space inner-product is valuable for certain quantum information processing (QIP) tasks~\cite{Cro15} 
such as non-orthogonal state discrimination~\cite{BBC+13}, cloning~\cite{ZWX+20} and quantum algorithms~\cite{BBJM07,Mos09},
but perfunctory applications have led to counter-factual conclusions~\cite{Cro15,Pat14,CCC14} 
including violation of the no-signalling principle~\cite{YHFL14}. 
Our aim is to prescribe the correct procedure for changing Hilbert-space inner product and
to devise an experiment to validate our prescription.

Consistency  of a changing Hilbert-space inner product
with standard QM  and the 
unobservability of such a change in closed systems 
have been investigated.
A C$^*$-algebraic approach shows that
a set of non-Hermitian operators comprises the
observables of a
quantum mechanical system if and only if the operators are Hermitian with respect to
a new Hilbert-space inner product~\cite{SGH92}.
Such a modified inner product is the key to proving the equivalence
of PT-symmetric QM with 
the Dirac-von Neumann formulation of QM in the case of closed systems i.e., systems in which 
every time evolution is a unitary operation~\cite{BBJ02,Mos03a,Mos10a,Mos10b,Zno15,Mos18,ZWG19,JMCN19}.
Furthermore, 
this equivalence implies that any change in inner product is unobservable in experiments on closed systems~\cite{Bro16}.
Therefore, the above proposals that use the inner-product change for QIP tasks
as well as the counter-factual claims are not applicable to closed systems.

Evolution generated by PT-symmetric Hamiltonians has been implemented experimentally
for applications including sensing~\cite{LZO+16,COZ+17,HHW+17}, cloaking~\cite{ZFZ+13,SFA15}
and unidirectional propagation~\cite{RKEC10}.
These experiments simulate PT-symmetric dynamics on classical~\cite{RME+10,SLZ+11,POS+14,ZZS+16} or
quantum~\cite{BDG+12,XZB+17,LHL+19,XQW+19} systems by balancing loss and gain.
Another way to simulate  PT-symmetric Hamiltonians with real spectra is by 
dilating the non-unitary propagator to a non-local unitary operator over multiple subsystems, which has been demonstrated 
on qubit systems~\cite{GS08a,GS08b,ZHL13,TWY+16,KAU17,HKW18,WLG+19,XWZ+19,GZL+21}.
However, none of these simulation strategies involve effecting a change 
of inner product.

PT-symmetric Hamiltonians and a changing Hilbert-space inner product
are known to be consistent with standard QM for closed systems,
but they are not yet known to be consistent for open systems.
To solve these outstanding problems, we 
construct an operational framework, consistent with the C$^*$-algebraic formulation of QM,
which accommodates a change in inner product between preparation and measurement.
Furthermore, neither PT symmetry nor a changing Hilbert-space inner product are observable in
closed systems, but could be observable in open systems~\cite{Bro16}.
We show our change in inner product is implemented by a quantum operation
(henceforth assumed to be completely positive and trace non-increasing),
which can be observed using tomography.
Next we connect our framework to the burgeoning topic of PT-symmetric QM
by explaining how an inner-product-changing quantum operation can be 
used to implement PT-symmetric dynamics in an open system.
Finally, at the empirical level, we describe a potential experimental simulation for changing the inner product 
of a qubit by subjecting a qutrit to unitary evolution and 
neglecting the third Hilbert-space dimension during preparation and measurement but not during evolution.
We also extend this simulation procedure to $d$-dimensional systems.

To construct the operational framework for changing the inner product associated to a quantum system
between preparation and measurement,
we adopt the C$^*$-algebraic framework of QM~\cite{Str08},
which provides freedom in representing 
a given system on different Hilbert spaces following the Gel'fand-Naimark-Segal (GNS) construction~\cite{GN43,Seg47}.  
We employ this representation freedom 
first to construct representations of the C$^*$ algebra on a pair of Hilbert spaces 
whose inner products are related by a given metric operator~$\eta$. We then
define the change in inner product by~$\eta$ as the identity
isomorphism between the two Hilbert spaces.
To operationalize the change in inner product, we use commutative diagrams that connect 
this isomorphism to a quantum operation 
between the bounded operators on the two Hilbert spaces and 
finally observe that the quantum operation induces an observable physical transformation on the system.

In the operational approach, the operators of a quantum system form a unital C${}^*$ algebra
$\mathcal{A}=\{A\}$, which is equipped with a ${}^*$ operation that 
captures the notion of adjoint. The algebra~$\mathcal{A}$ is representable on a 
possibly infinite dimensional 
Hilbert space~$\mathscr H=(\mathscr{V},\langle\,\vert\,\rangle)$, 
comprising a complete vector space~$\mathscr V$
and an inner product~$\langle\,\vert\,\rangle$, which is a non-degenerate sesquilinear form.
In Fig.~\ref{fig:commutativediagram}(a),
observables are self-adjoint elements of~$\mathcal{A}$ and correspond to allowed measurements.
A representation of $\mathcal{A}$ is a product-preserving linear map 
\begin{equation}
\label{eq:pi*dagger}
\mathcal{A} \stackrel{\pi}{\to} \mathcal{B}(\mathscr{H}):
\pi(A^*) = (\pi(A))^\dagger,
\end{equation}
where~$\mathcal{B}(\mathscr H)$ denotes the space of bounded linear operators
acting on~$\mathscr H$ and $^\dagger$ denotes the Hermitian conjugate.
Such a representation can be obtained using the GNS construction~\cite{GN43,Seg47}.
Product preservation ensures that if $I$ is the identity operator in $\mathcal{A}$, then
$\pi(I)$ is the identity operator in~$\mathcal{B}(\mathscr H)$.
An operator $M\in \mathcal{B}(\mathscr H)$ satisfying $M^\dagger = M$ is called a self-adjoint or a Hermitian operator.
\begin{figure}
\begin{center}
\includegraphics[width = \columnwidth]{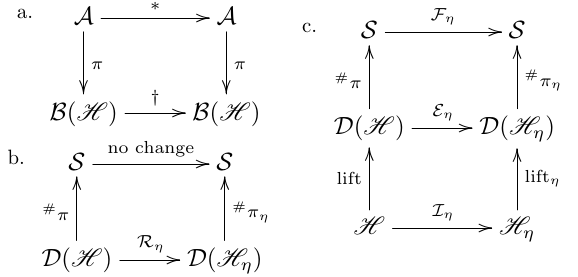}
\end{center}
\caption{
(a)~Diagram illustrating the relation between the ${}^{*}$ operation of $\mathcal{A}$ and the 
$\dagger$ operation of $\mathcal{B}(\mathscr{H})$ under the representation $\pi$.
(b) Commutative diagram depicting the change of representation from $\pi$ to $\pi_{\eta}$ under $\mathcal{R}_\eta$. Operationally, 
$\mathcal{R}_\eta$ represents a trivial transformation, i.e., no change, in $S$. (c)~Commutative diagram illustrating the relation between 
the maps $\mathcal{F}_\eta$,
$\mathcal{E}_\eta$ and $\mathcal{I}_{\eta}$. }
\label{fig:commutativediagram}
\end{figure}

We now define states and explain how to represent states as operators on Hilbert space.
States correspond to allowed preparations of the system
(Fig.~\ref{fig:commutativediagram}(b)).
A state $\omega$ is a positive linear functional on $\mathcal{A}$
that is normalized, i.e. $\omega(I) = 1$.
This definition is extended to include any subnormalized positive linear functional,
i.e.\ $\omega(I)\le 1$, which corresponds to probabilistic preparation in the state 
$\omega/\omega(I)$ with probability $\omega(I)$~\cite{Gud79}.
Supernormalized positive linear functionals are not valid states
according to this probabilistic interpretation~\footnote{
We remark that supernormalized functionals are valid states in some other frameworks
for describing system evolution under non-Hermitian Hamiltonians~\cite{Mos07,GKN10,UGRM12}
}.
Let $\mathcal{D}(\mathscr{H}):= \{\rho:\rho \ge 0,\rho=\rho^\dagger,
\text{tr}(\rho)\leq 1\} \subset \mathcal{B}(\mathscr{H})$ denote the set of density operators
with $\rho\in \mathcal{D}(\mathscr H)$ representing a state~$\omega$ 
by~$\rho\stackrel{\!\!{}^{\#}\!\!\pi}{\mapsto}\omega$
if and only if the expectation value~tr$(\rho\pi(A)) = \omega(A)\,\forall A$.
As ${}^\#\pi$ is uniquely determined by $\pi$, we say
that $\omega$ is represented by $\rho\in \mathcal{D}(\mathscr H)$ under $\pi$. 
We denote by  $\mathcal{S} = \{\omega\}$, the set of all states that are represented by density operators under $\pi$.
For the special case of pure~$\omega$,
$\rho = \ket{\psi}\bra{\psi}$ for some $\ket{\psi}\in \mathscr H$ with $\braket{\psi | \psi} \le 1$.
The transformation $\ket{\psi}\stackrel{\operatorname{lift}}{\mapsto}\ket{\psi}\bra{\psi}$ relates Hilbert-space
vectors to the density operators in~$\mathcal{D}(\mathscr H)$~\footnote{
The map lift is defined to act only on the normalized and subnormalized vectors in $\mathscr{H}$.
The domain of lift in Fig.~\ref{fig:commutativediagram}(c) is shown to be $\mathscr{H}$ for simplicity, but is specified 
rigorously in Appendix~\ref{sec:liftmap}.
}.

Now that we have explained states and their representations,
we now discuss changing representation
to being over a different Hilbert space.
Given a self-adjoint positive-definite metric operator~$\eta\in\mathcal{B}(\mathscr{H})$,
a new Hilbert space~$\mathscr{H}_\eta = (\mathscr{V},\langle\,\vert\,\rangle_\eta)$ can be constructed~\cite{SGH92} 
such that the inner products of the two Hilbert spaces are related by
$\langle \bullet\,\vert\,\bullet\rangle_\eta := \langle\bullet\vert\,\eta\bullet\rangle$.
A representation $\pi_{\eta}$ on $\mathscr{H}_{\eta}$ can be constructed 
through~(see Appendix~\ref{sec:newHilbertspace})
\begin{equation}
\pi(\bullet)\stackrel{\mathcal{R}_{\eta}}{\mapsto}\pi_{\eta}(\bullet):=R\pi(\bullet)R^{-1},
\mathcal{R}_\eta:\mathcal{B}(\mathscr H)\to\mathcal{B}(\mathscr H_\eta),
\end{equation}
where $R = \eta^{\nicefrac{-1}{2}}$ is a linear isometry from $\mathscr H$ to $\mathscr H_\eta$.
We note that this isometry has been used to prove that Hamiltonians with unbroken PT symmetry 
are consistent with standard QM, in the case of closed systems~\cite{GS08a,Mos10a,Mos10b,Zno15,Cro15}.
We refer to the quantum channel $\mathcal{R}_{\eta}$ as a `change in representation' (Fig.~\ref{fig:commutativediagram}b).
In representation~$\pi_{\eta}$, the state $\omega\stackrel{\;{}^{\#}\!\!\pi_{\eta}}{\mapsfrom}\mathcal{R}_{\eta}(\rho)$ 
such that tr$(\mathcal{R}_{\eta}(\rho)\pi_{\eta}(A)) = \omega(A)\forall A$,
and $\mathcal{R}_{\eta}(\rho) \ne \rho$ in general. 
For any pure state~$\omega$, $\exists\ket\psi \in \mathscr{H}_{\eta}$ such that $\omega\stackrel{\;{}^{\#}\!\!\pi_{\eta}}{\mapsfrom}\ket{\psi}\bra{\psi}
\eta\stackrel{\operatorname{lift}_{\eta}}{\mapsfrom}\ket{\psi}$~(see Appendix~\ref{sec:liftmap}).
As 
\begin{equation}
\label{eq:differentreps}
    \text{tr}(\rho \pi(A)) = \text{tr}(\mathcal{R}_{\eta}(\rho) \pi_\eta(A)) \;\;\forall \omega,A,
\end{equation}
representations $\pi$ and~$\pi_{\eta}$ are physically,
i.e. observationally, indistinguishable. 
The right-hand side of Eq.~\eqref{eq:differentreps} 
can also be interpreted as preparation (state) described in $\pi$
followed by a change in representation from~$\pi$ to~$\pi_\eta$ 
effected by $\mathcal{R}_{\eta}$ and finally measurement (observable) described in~$\pi_\eta$.
Change in representation between preparation and measurement sets the stage for 
our definition of change in inner product. 

We define a change in inner product by $\eta$
to be the identity isomorphism~$\mathcal{I}_{\eta}:\mathscr H \to \mathscr{H}_{\eta}$ such that
every $\ket{\psi} \mapsto \ket{\psi}$.
For any pair $\ket{\psi},\ket{\phi}\in\mathscr{H}$,
the inner product between the pair of transformed vectors $\mathcal{I}_{\eta}\ket{\psi},\ \mathcal{I}_{\eta}\ket{\phi}$ 
is $\braket{\psi|\eta|\phi}$,
and the change is trivial if $\eta = \pi(I)$; i.e.\ 
for all pairs $\ket{\psi},\ket{\phi}$, $\braket{\psi|\eta|\phi} = \braket{\psi|\phi}$.
Our definition is motivated by proposals to effect PT-symmetric evolution and measurement by 
changing the Hilbert-space inner product~\cite{BBJM07,BBC+13} but without a 
prescription for making such changes operationally or mathematically.
Next we explain separately, for the cases $\eta \le \pi(I)$ and $\eta \nleq \pi(I)$,
how the isomorphism $\mathcal{I}_\eta$ can be physically realized as a quantum operation.

The change in inner product by $\eta \le \pi(I)$ is physically realizable 
via the operation 
\begin{equation}\label{eq:Etilde}
    \mathcal{E}_\eta:\mathcal{B}\left(\mathscr H\right) \to \mathcal{B}\left(\mathscr{H}_{\eta}\right):M \mapsto M\eta,
\end{equation}
which is not trace-preserving for $\eta \ne \pi(I)$~(see Appendix~\ref{sec:channelEeta}).
The operation  $\mathcal{E}_\eta$ mimics the action of $\mathcal{I}_{\eta}$ at the level of density operators,
because $\ket{\psi} \stackrel{\operatorname{lift}}{\mapsto}\ket{\psi}\bra{\psi}$
whereas~$\mathcal{I}_{\eta}\ket{\psi} \stackrel{\operatorname{lift}_{\eta}}{\mapsto}\ket{\psi}\bra{\psi}\eta = 
\mathcal{E}_\eta(\ket{\psi}\bra{\psi})$.
The operation $\mathcal{E}_\eta$ induces a linear map 
$\mathcal{F}_\eta:\mathcal{S} \to \mathcal{S}$
such that, for any pure $\omega\in \mathcal{S}$, both $\omega$ and $\mathcal{F}_\eta(\omega)$ are represented by the 
same $\ket{\psi}$ under the representations $\pi$ and $\pi_{\eta}$ respectively.
However, $\omega$ and $\mathcal{F}_\eta(\omega)$ are not necessarily the same state (Fig.~\ref{fig:commutativediagram}c).
Even in the special case where the two states differ by a scaling factor,
they are inequivalent in our setting.
The expectation value of $I$ with respect to 
$\mathcal{F}_\eta(\omega)$ gives the success probability
of the inner-product changing quantum operation on the state $\omega$.
$\mathcal{E}_\eta$ can be implemented experimentally 
by lossy purity-preserving operations,
i.e., operations that are not necessarily deterministic and 
transform the set of pure states into itself.
In the Heisenberg picture,
the operators transform according to the map
\begin{equation}\label{eq:Edual}
\mathcal{E}^{\rm op}_\eta:\mathcal{B}\left(\mathscr H\right) \to \mathcal{B}\left(\mathscr{H}_{\eta}\right):M \mapsto \eta M.
\end{equation}
This transformation $\mathcal{E}^{\rm op}_\eta$ could modify  
commutator relations as we show in Appendix~\ref{sec:commutation}.

In the case $\eta \nleq \pi(I)$,
$\mathcal{E}_\eta$
is completely positive
but trace-increasing for some $\rho\in\mathcal{B}(\mathscr{H})$ 
and hence not a quantum operation.
In such cases, a scaled version of change in inner product can be implemented in the following way:
choose $\kappa\in(0,1)$ such that $\kappa\eta\leq \pi(I)$ and observe that 
$\mathcal{E}_{\kappa\eta} = \kappa\mathcal{E}_\eta$ with 
$\mathcal{E}_{\kappa\eta}$ a quantum operation. 
Therefore, $\mathcal{E}_{\kappa\eta}$  implements change in inner product by $\eta \nleq \pi(I)$
up to a scaling factor $\kappa$.
Such a scaled version of change in inner product is useful to reverse the effect of operation 
$\mathcal{E}_\eta$ when $\eta \leq \pi(I)$.
In this case, the isomorphism $\mathcal{I}_{\eta^{-1}}:\mathscr{H}_\eta\to\mathscr{H}$
reverses the change in inner product and the corresponding $\mathcal{E}_{\eta^{-1}}$
is not a valid operation because  $\eta^{-1} \ge \pi_{\eta}(I)$.
Nevertheless, we can choose $\kappa = \nicefrac{1}{\|\eta^{-1}\|}$, 
where $\|\bullet\|$ denotes the operator norm~\cite{Con07}, 
and observe that $\mathcal{E}_{\kappa\eta^{-1}}\circ\mathcal{E}_{\eta}(\rho)=\kappa \rho$ 
for all $\rho\in \mathcal{B}(\mathscr{H})$.
Therefore, the operation   
$\mathcal{E}_{\kappa\eta^{-1}}:\mathcal{B}(\mathscr{H}_{\eta})\to\mathcal{B}(\mathscr{H})$ 
reverses, with probability $\kappa$, the change in inner product by $\eta$.

The metric operator $\eta$ can be estimated via quantum process tomography~\cite{CN97} 
for $\eta\leq \pi(I)$,
or $\kappa \eta$ if otherwise.
The change in inner product by $\eta\leq \pi(I)$ is implemented via the operation 
(Fig.~\ref{fig:Etilde})
\begin{equation}
\label{eq:F}
\mathcal{E}_\eta= \mathcal{R}_{\eta}\circ \mathcal{G}_\eta,\,    
\mathcal{G}_\eta:\mathcal{B}(\mathscr H)\to \mathcal{B}(\mathscr H):M \mapsto \eta^{\nicefrac{1}{2}}M\eta^{\nicefrac{1}{2}},
\end{equation}
for the Kraus rank-1 operation $\mathcal{G}_\eta$.
Then the Kraus operator $\eta^{\nicefrac{1}{2}}$ and therefore $\eta$
can be estimated by quantum process tomography for trace non-increasing channels~\cite{BSS+10}.
In the other case $\eta\nleq \pi(I)$,
the change in inner product is implemented by the operation  
$\mathcal{E}_{\kappa\eta}$ from which $\kappa \eta$ is estimated similarly; however, 
the above procedure does not yield $\kappa$ and $\eta$ separately.
\begin{figure}
    \begin{center}
    \includegraphics[width = 0.7\columnwidth]{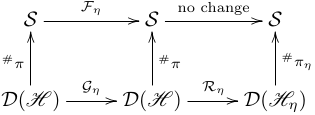}
    \end{center}
    \caption{Commutative diagram showing the action of $\mathcal{F}_\eta$ decomposed in terms of $\mathcal{G}_\eta$ and $\mathcal{R}_\eta$.}
    \label{fig:Etilde}
\end{figure}

We now discuss how to implement dynamics generated by 
a diagonalizable Hamiltonian $H_{\rm PT}$ with unbroken PT symmetry in finite dimensions
over a time $t \ge 0$,
by building on our framework for changing inner product.
The dynamical transformation generated by~$H_{\rm PT}$ is
\begin{equation}
\label{eq:UPT}
    \rho \stackrel{\mathcal{U}_{\rm PT}}{\mapsto} \kappa U_{\rm{PT}}\rho U^\dagger_{\rm{PT}}, \; U_{\rm{PT}} := \text{e}^{-\text{i}H_{\rm {PT}}t/\hbar},
\end{equation}
for some $\kappa\in(0,1)$, where both $\rho$, $\mathcal{U}_{\rm PT}(\rho)\in\mathcal{D}(\mathscr{H})$ represent states under $\pi$.
We show that this dynamics can be implemented by first changing the inner product, 
then applying an appropriate unitary channel and finally reversing the change in inner product.
To explain this sequence, we consider an arbitrary pure state represented by $\ket{\psi}\in \mathscr{H}$,
which is to be transformed to that represented by $\sqrt{\kappa}U_{\rm PT}\ket{\psi}\in \mathscr{H}$ 
(lower row of Fig.~\ref{fig:PTdynamics}).
Next, we compute a metric operator $\eta\leq \pi(I)\in \mathcal{B}(\mathscr H)$
that satisfies the quasi-Hermiticity condition
\begin{equation}\label{eq:quasiHermitian}
H^\dagger_{\rm {PT}} = \eta H_{\rm {PT}} \eta^{-1};
\end{equation}
the existence of such an $\eta$ is guaranteed 
as $H_{\rm PT}$ has unbroken PT symmetry~\cite{Mos02}.
The Hamiltonian $H_{\rm PT}$ is self-adjoint with respect to the inner product of the new Hilbert space $\mathscr{H}_{\eta}$.
Therefore, $U_{\rm PT}$ represents unitary dynamics on $\mathscr{H}_{\eta}$,
which constitutes the second step of the sequence.
Prior to implementing~$U_{\rm PT}$, we transform $\ket{\psi}\in\mathscr{H}$ to $\ket{\psi}\in\mathscr{H}_{\eta}$ 
via a change in inner product using $\mathcal{I}_\eta$.
Finally, the transformation from $U_{\rm PT}\ket{\psi}\in\mathscr{H}_{\eta}$ to $\sqrt{\kappa}U_{\rm PT}\ket{\psi}\in\mathscr{H}$
is equivalent to reversing the change in inner product using  $\mathcal{I}_{\eta^{-1}}$ with probability $\kappa$. 
This sequence extends to general mixed states by the application of lift, lift${}_\eta$ maps 
and linearity (upper row of Fig.~\ref{fig:PTdynamics});
here  $ \widetilde{\mathcal{U}}_{\rm{PT}}\in\mathcal{B}\left(\mathscr{H}_\eta\right)$ 
is the unitary channel satisfying
$\operatorname{lift}_\eta\left(U_{\rm PT}\ket{\psi}\right) =  \widetilde{\mathcal{U}}_{\rm{PT}}\left(\operatorname{lift}_\eta\left(\ket{\psi}\right)\right)$,
for all $\ket{\psi}\in\mathscr{H}_\eta$.

\begin{figure}
\begin{center}
\includegraphics[width = \columnwidth]{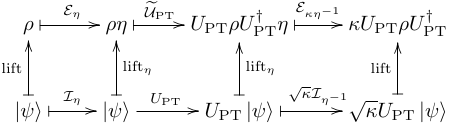}
\end{center}
\caption{Diagram showing implementation of a PT-symmetric dynamics using change in inner product and unitary dynamics.}
\label{fig:PTdynamics}
\end{figure}

PT-symmetric dynamics in Eq.~\eqref{eq:UPT} can be expressed as a sequence of channels 
acting exclusively on $\mathcal{B}(\mathscr{H})$, thereby paving the way for experimental simulation 
of PT-symmetric systems.
Following the upper row of Fig.~\ref{fig:PTdynamics}, 
we start by expressing $\mathcal{U}_{\rm{PT}}$ (Eq.~\eqref{eq:UPT})
as  $\mathcal{U}_{\rm{PT}}= \mathcal{E}_{\kappa \eta^{-1}} \circ  \widetilde{\mathcal{U}}_{\rm{PT}} \circ \mathcal{E}_\eta$.
Similar to~Eq.\ \eqref{eq:F},
we express the reverse change in inner product as 
$\mathcal{E}_{\kappa\eta^{-1}}= \mathcal{G}_{\kappa\eta^{-1}}\circ \mathcal{R}_{\kappa\eta^{-1}}$,
for  $\mathcal{G}_{\kappa\eta^{-1}}: \mathcal{B}(\mathscr{H})\to \mathcal{B}(\mathscr{H}): 
\mathcal{G}_{\kappa\eta^{-1}}(M) = \kappa \eta^{-\nicefrac{1}{2}}M\eta^{\nicefrac{1}{2}} $ and
$\mathcal{R}_{\kappa\eta^{-1}}: \mathcal{B}(\mathscr{H}_{\eta})\to \mathcal{B}(\mathscr{H})$ is the 
channel effecting the change in representation form $\pi_\eta$ to $\pi$.
We then rewrite~$\mathcal{U}_{\rm{PT}}$ as
\begin{equation}\label{eq:PTUnitaryoperational}
    \mathcal{U}_{\rm{PT}}= {\mathcal{G}}_{\kappa \eta^{-1}} \circ (\mathcal{R}_{\kappa\eta^{-1}} \circ \widetilde{\mathcal{U}}_{\rm{PT}} \circ \mathcal{R}_{\eta})\circ \mathcal{G}_\eta,
\end{equation}
which is the desired decomposition. The channels ${\mathcal{G}}_{\kappa \eta^{-1}}$, $ \mathcal{G}_\eta$ have 
single Kraus operators $\sqrt{\kappa}\eta^{-\nicefrac{1}{2}}$ and $\eta^{\nicefrac{1}{2}}$ respectively.
The maps $\mathcal{R}_{\eta}$,  $\mathcal{R}_{\kappa\eta^{-1}}$
only effect change in representation and operationally are equivalent to no change. 
Finally, the transformation $\mathcal{R}_{\kappa\eta^{-1}} \circ \widetilde{\mathcal{U}}_{\rm{PT}} \circ \mathcal{R}_{\eta}$
implements a channel 
corresponding to the unitary Kraus-operator $\eta^{\nicefrac{1}{2}}U_{\rm PT}\eta^{\nicefrac{-1}{2}}$ acting
on~$\mathscr{H}$, generated by the Hamiltonian 
\begin{equation}\label{eq:selfadjointHPT}
    h_{\rm PT} = \eta^{\nicefrac{1}{2}}H_{\rm PT}\eta^{\nicefrac{-1}{2}} \in \mathcal{B}(\mathscr{H}),
\end{equation}
which can be verified to be self-adjoint, i.e.\ $h_{\rm PT}^\dagger = h_{\rm PT}$, 
using the quasi-Hermiticity condition in Eq.~\eqref{eq:quasiHermitian}.

We now design a qutrit procedure for an agent to simulate successfully the change in inner product by $\eta \le \pi(I)$ 
of a qubit system with algebra $\mathcal{A}$, which is represented on a 
two-dimensional Hilbert space $\mathscr{H}_2$ by $\pi$.
Our procedure, which shall simulate the operation $\mathcal{G}_\eta$ (Fig.~\ref{fig:Etilde}),
uses a unitary operation on the three-dimensional Hilbert space $\mathscr{H}_3=\mathscr{H}_2 \oplus \mathscr{H}_1$ 
followed by a projective measurement on to $\mathscr{H}_2$ and postselection, as we now explain.
For any $\eta \le \pi(I)$, we first construct the metric operator
\begin{equation}
\label{eq:tildeeta}
\tilde{\eta}:=\frac{1}{\|\eta\|}\eta
\implies\mathcal{G}_\eta = \|\eta\|\mathcal{G}_{\tilde{\eta}},
\end{equation}
and the unitary operator $U_{\tilde{\eta}}\in \mathcal{B}(\mathscr{H}_3)$ that satisfies
\begin{equation}
\label{eq:simulatingFeta}
    \mathcal{G}_{\tilde{\eta}}(\rho)\oplus \bm{0} = PU_{\tilde{\eta}}\sigma U_{\tilde{\eta}}^\dagger P,\, \sigma:=\rho\oplus \bm{0},
    \;\forall \rho\in \mathcal{B}(\mathscr{H}_2),
\end{equation}
where $P$ is the orthogonal projector on~$\mathscr{H}_2$. 
The matrix representation of~$U_{\tilde{\eta}}$ is (see Appendix~\ref{sec:matrixUeta})
\begin{equation}
\label{eq:Ueta}
    \left[U_{\tilde{\eta}}\right]
    = \begin{pmatrix}
    \left[\tilde{\eta}\right]^{\frac12} &\bm{u} \\ -\text{e}^{\text{i} \theta}\bar{\bm{u}}^{\top} &\text{e}^{\text{i}\theta}r 
    \end{pmatrix},
    \text{spec}\left(\tilde{\eta}^{\nicefrac12}\right)=\{1,r\},
    \theta\in [0,2\pi),
\end{equation}
where~$[\;]$ denotes matrix representation,
$\bm{u}$ is the eigenvector of $[\tilde{\eta}]^{\nicefrac12}$ with eigenvalue $r$
and $\| \bm{u}\| = \sqrt{1-r^2}$.
Furthermore,
$\bar{\bm{u}}^{\top}$ is the Hermitian conjugate of the vector $\bm{u}$.
Both~$\theta$ and the global phase of $\bm{u}$ are free parameters.
The qutrit unitary operator $U_{\tilde{\eta}}$ is part of the overall 
simulation procedure (Eq.~\eqref{eq:simulatingFeta}).

Now we explain how an agent 
can sequentially apply each operator in Eq.~\eqref{eq:simulatingFeta}
to simulate $\mathcal{G}_{\eta}$ (Fig.~\ref{fig:flowchart}).
The agent is provided with a description of $2\times2$ matrix~$[\eta]$,
in the logical basis~$\{\ket0,\ket1\}$ 
and a quantum state $\sigma$ (Eq.~\eqref{eq:simulatingFeta}).
The task is to generate the state
$\left({\mathcal{G}_{\eta}(\rho)\oplus \bm{0}}\right)/{{\rm tr}\left(\mathcal{G}_{\eta}(\rho)\oplus \bm{0}\right)}$
with probability ${\rm tr}\left(\mathcal{G}_{\tilde{\eta}}(\rho)\oplus \bm{0}\right)$.
The agent first computes $\left[\tilde{\eta}\right]$ (Eq.~\eqref{eq:tildeeta}) and
$\left[U_{\tilde{\eta}}\right]$ (Eq.~\eqref{eq:Ueta})
and then applies physical operations corresponding to~$U_{\tilde{\eta}}$ on~$\sigma$ followed by projective measurement~$P$ (Eq.~\eqref{eq:simulatingFeta}).
For non-zero measurement outcome,
which occurs with probability ${\rm tr}\left(\mathcal{G}_{\tilde{\eta}}(\rho)\oplus \bm{0}\right)$,
the post-measurement state obtained is
(Eq.~\eqref{eq:tildeeta})
\begin{equation}
   \frac{\mathcal{G}_{\tilde{\eta}}(\rho)\oplus \bm{0}}{{\rm tr}(\mathcal{G}_{\tilde{\eta}}(\rho)\oplus \bm{0})} = \frac{\mathcal{G}_\eta(\rho)\oplus \bm{0}}{{\rm tr}(\mathcal{G}_\eta(\rho)\oplus \bm{0})}.
\end{equation}
The agent discards the state if the measurement outcome is zero.
This concludes the simulation procedure.
The agent may further estimate the success probability ${\rm tr}\left(\mathcal{G}_{\tilde{\eta}}(\rho)\oplus \bm{0}\right)$,
if required, by repeating the simulation procedure on a large number of copies
of $\sigma$ provided to them and then calculating the ratio of non-zero measurement outcomes to the total
number of copies used~\footnote{
The ratio of non-zero measurement outcomes to the total
number of copies approaches the success probability by the law of large numbers~\cite{DKLM05}.
Our setting assumes that multiple copies of the state $\sigma$
are provided to the agent by an external agent who has the knowledge of
$\sigma$, and not prepared by the agent implementing the inner-product changing channel, say, by cloning.}.

In Appendix~\ref{sec:qubitPTsymmetry}, we provide an explicit 
procedure to simulate the dynamics (Eq.~\eqref{eq:UPT}) 
of the qubit PT-symmetric Hamiltonian 
~\cite{BB98},
by sequentially applying the operators in Eq.~\eqref{eq:PTUnitaryoperational} and by using the qutrit 
simulation procedure to implement~$\mathcal{G}_{\eta},\mathcal{G}_{\kappa\eta^{-1}}$.
In Appendix~\ref{sec:dsimulation}, we design a simulation procedure, similar to our qutrit procedure given above, 
for changing the inner product of a $d$-dimensional 
system using a $2d$-dimensional system for any positive integer~$d$. Furthermore, we
use our procedure to simulate the dynamics of a $d$-dimensional PT-symmetric Hamiltonian
by using only $2d$ dimensions, instead of $d^3$ dimensions as required in the Stinespring dilation approach \cite{Sti95}.

\begin{figure}
\begin{center}
\includegraphics[width = \columnwidth]{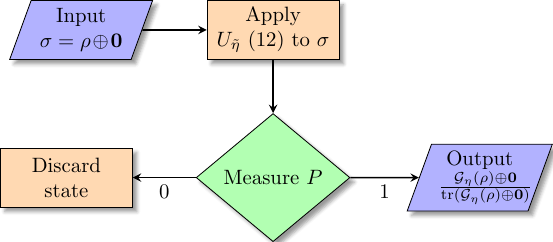}
\end{center}
\vspace{-0.5cm}
\caption{Simulation of the application of $\mathcal{G}_{{\eta}}$ on $\rho$ with success probability
${\rm tr}\left(\mathcal{G}_{\tilde{\eta}}(\rho)\oplus \bm{0}\right)$}. 
\label{fig:flowchart}
\end{figure}

We also design a scheme to verify tomographically whether a prover
can perform an arbitrary change of inner product using our qutrit simulation procedure.
Input to the verification scheme is a threshold function $D_{\rm th}:\mathcal{B}(\mathscr{H}_2)\to (0,1)$ given as a black-box.
The output is `accept' if $\|\mathcal{G}_\eta\oplus \bm{0} - \hat{\mathcal{G}}_\eta\|_{1\to 1}\leq D_{\rm th}(\eta)$ 
or `reject' otherwise, where $\hat{\mathcal{G}}_\eta:\mathcal{B}(\mathscr{H}_3)\to \mathcal{B}(\mathscr{H}_3)$ 
represents a tomographic reconstruction of the qutrit process implemented by the prover, 
$\mathcal{G}_\eta\oplus \bm{0}$ extends the action of $\mathcal{G}_\eta$ to $B(\mathscr{H}_3)$
and $\|\bullet\|_{1 \to 1}$ is the induced Schatten $(1\to 1)$-norm~\cite{Pau03}.
The verifier supplies to the prover a randomly chosen valid $\eta$, a positive integer 
$N$ sufficiently large for the process tomography~\cite{STM11}
and copies of the quantum states $\sigma_i$ encoding $\rho_i$ on demand,
where $\{\rho_i\}$ is chosen based on the tomography procedure in use. 
The prover returns $N$ copies of the qutrit states on which the change of inner product is successful
 as well as the success ratios for each $\rho_i$,
both of which are used by the verifier to reconstruct $\hat{\mathcal{G}_\eta}$.
To ensure that the verifier does not accept the process 
performed by a dishonest prover implementing only
qubit-unitary channels and randomly discarding the system,
it suffices to set the threshold to 
$D_{\rm th}(\eta) = \nicefrac{1}{3}(\lambda_1-\lambda_2)$,
where $\lambda_1>\lambda_2>0$ are the eigenvalues of $\eta$ (see Appendix~\ref{sec:threshold}).

In conclusion, we have three major results.
First, we have operationalized Hilbert-space inner-product change in a way that is both observable and fully compatible with axiomatic quantum mechanics. 
Physically we can understand this inner-product change as 
a lossy quantum operation effecting a change in norm.
This lossy operation
is reminiscent of how superluminality is reconciled by electromagnetic absorption~\cite{SKC93},
with loss in our case forbidding past counterfactual claims.
Consistency of our work is proven using C$^*$ algebra and representations.
Alternatively,
our claims can be verified experimentally by conducting two physically distinct experiments.
One experiment is for the lower-dimensional lossy quantum operation 
and the other experiment is for the higher-dimensional unitary channel with both realizations yielding the same success ratio and 
measurement statistics for a given task.
Our theory fully explains unbroken PT-symmetric quantum mechanics in all its forms as being about changing Hilbert-space inner product and observing its consequences.
Our scheme for simulating qubit PT-symmetric Hamiltonians 
only requires one extra Hilbert-space dimension and no interaction with the environment,
which eliminates the requirements for 
multiple subsystems and entangling operations used in existing schemes~\cite{GS08a,WLG+19,TWY+16,WLG+19,XWZ+19,GZL+21}.
We also show how to simulate $d$-dimensional ($d\geq 2$) PT-symmetric Hamiltonians using
$2d$ dimensions, as opposed to using $d^3$ dimensions in the Stinespring dilation approach.
Our results open possibilities for simulating PT-symmetric dynamics on new experimental platforms, 
such as transmons, where high fidelity qutrit-unitary operations have already been demonstrated~\cite{BRS+20,KYSL20}.

\acknowledgements 
This project is supported by the Government of Alberta and 
by the Natural Sciences and Engineering Research Council of Canada (NSERC).
S.\ K.\ is grateful for a University of Calgary Eyes High International Doctoral Scholarship and an Alberta Innovates Graduate Student Scholarship.
A.\ A.\ acknowledges support through a Killam 2020 Postdoctoral Fellowship.

\begin{appendix}

\section{Constructing a representation of the C\texorpdfstring{$^{*}$}{*} algebra on the Hilbert space with a different inner product}\label{sec:newHilbertspace}
In this section, we show the construction of the Hilbert space $\mathscr{H}_\eta$ with inner product related 
to that of $\mathscr{H}$ by the metric
operator $\eta$,
the construction of a ${}^*$-representation of the C$^{*}$ algebra $\mathcal{A}$ on this new Hilbert space, 
and
finally the representation of states in $\mathcal{S}$ using density operators on $\mathscr{H}_\eta$.

\subsection{Constructing a new Hilbert space from the metric operator}
For a possibly infinte dimensional Hilbert space~$\mathscr{H}$, we denote by $\mathcal{L}(\mathscr{H})$ and $\mathcal{B}(\mathscr{H})$ the algebra of 
linear and bounded linear operators on~$\mathscr{H}$ respectively. We also denote by $\mathcal{D}\left(\mathscr{H}\right) := 
\{\rho \in \mathcal{B}(\mathscr{H}) : \rho \ge 0, \rho^\dagger = \rho, \text{tr}\rho\le 1 \}$ the set 
of density operators acting on~$\mathscr{H}_\eta$. 
\begin{definition}[\cite{Con07}]
The adjoint of an operator $A \in \mathcal{B}(\mathscr{H})$ is the unique operator $A^\dagger\in \mathcal{B}(\mathscr{H})$ satisfying
\begin{equation}
    \braket{\phi|A|\psi} = \overline{\braket{\psi|A^\dagger|\phi}} \;\; \forall \ket{\phi},\ket{\psi} \in \mathscr{H}.
\end{equation} 
The operator $A$ is self-adjoint if $A = A^\dagger$.
\end{definition}
\begin{definition}[\cite{Con07}]
An operator $A \in \mathcal{L}(\mathscr{H})$ is positive-definite if 
\begin{equation}
    \braket{\phi|A|\phi} >0  \;\; \forall \ket{\phi} \in \mathscr{H},\;\ket{\phi} \ne 0.
\end{equation} 
\end{definition}
The following theorem is adapted from the Appendix A of Ref.~\cite{SGH92}.
\begin{theorem}
\label{thm:Heta}
For any Hilbert space $\mathscr{H} = \left(\mathscr{V},\langle\,\vert\,\rangle\right)$ 
and a self-adjoint, positive-definite operator $\eta \in \mathcal{B}(\mathscr{H})$, 
\begin{enumerate}
    \item the sesquilinear form 
    \begin{equation}
    \label{eq:newip}
        \langle\bullet|\bullet \rangle_\eta := \langle\bullet|\eta|\bullet\rangle
    \end{equation}
    is non-degenerate, therefore an inner product on $\mathscr{V}$.
    \item The vector space $\mathscr{V}$ is complete with respect to the norm induced by the inner product 
    $\braket{\bullet|\bullet}'$, therefore $\mathscr{H}_\eta = \left(\mathscr{V},\langle\,\vert\,\rangle_\eta\right)$  is a Hilbert space.
\end{enumerate}
\end{theorem}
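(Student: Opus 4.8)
The plan is to dispatch the two claims separately, treating claim~1 (that $\langle\,\vert\,\rangle_\eta$ is an inner product) as an essentially algebraic verification and reserving the analytic work for claim~2 (completeness). For claim~1, sesquilinearity of $\langle\bullet\vert\eta\bullet\rangle$ is inherited directly from the sesquilinearity of $\langle\,\vert\,\rangle$ together with the linearity of $\eta$, so I would only need to check conjugate symmetry and positive-definiteness. Conjugate symmetry, $\langle\phi\vert\psi\rangle_\eta=\overline{\langle\psi\vert\phi\rangle_\eta}$, follows from the self-adjointness $\eta=\eta^\dagger$, writing $\overline{\langle\psi\vert\eta\phi\rangle}=\langle\eta\phi\vert\psi\rangle=\langle\phi\vert\eta^\dagger\psi\rangle=\langle\phi\vert\eta\psi\rangle$. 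Positive-definiteness is immediate from that of $\eta$, since $\langle\phi\vert\phi\rangle_\eta=\langle\phi\vert\eta\phi\rangle>0$ for every $\phi\neq0$; this in particular forces the form to be non-degenerate, which is the one property explicitly named in the statement.

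For claim~2, the cleanest route is to show that the norm $\Vert\cdot\Vert_\eta$ induced by $\langle\,\vert\,\rangle_\eta$ is \emph{equivalent} to the original Hilbert-space norm $\Vert\cdot\Vert$, and then to transport completeness from $(\mathscr{V},\Vert\cdot\Vert)$---which holds because $\mathscr{H}$ is a Hilbert space---across the equivalence. The upper bound is routine: boundedness of $\eta$ gives $\Vert\phi\Vert_\eta^2=\langle\phi\vert\eta\phi\rangle\le\Vert\eta\Vert\,\Vert\phi\Vert^2$. For the lower bound I would use that the metric operator is invertible with bounded inverse (as is tacitly required throughout the framework, where $\eta^{\nicefrac{-1}{2}}$ appears as a bounded operator), so that its spectrum is contained in $[m,\Vert\eta\Vert]$ with $m:=\Vert\eta^{-1}\Vert^{-1}>0$; the continuous functional calculus then yields $mI\le\eta$ and hence $\Vert\phi\Vert_\eta^2\ge m\Vert\phi\Vert^2$. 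Combining, $\sqrt{m}\,\Vert\phi\Vert\le\Vert\phi\Vert_\eta\le\sqrt{\Vert\eta\Vert}\,\Vert\phi\Vert$, so a sequence is $\Vert\cdot\Vert_\eta$-Cauchy if and only if it is $\Vert\cdot\Vert$-Cauchy and the two norms induce the same convergent sequences; completeness of $(\mathscr{V},\Vert\cdot\Vert_\eta)$ then follows directly from completeness of $(\mathscr{V},\Vert\cdot\Vert)$.

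The one genuine subtlety---and the step I expect to be the main obstacle---is the lower norm bound in the infinite-dimensional case. The pointwise positive-definiteness $\langle\phi\vert\eta\phi\rangle>0$ stated in the definition does \emph{not} by itself guarantee that $\eta$ is bounded below: $\eta$ could have $0$ in its spectrum without $0$ being an eigenvalue, in which case $\inf_{\Vert\phi\Vert=1}\langle\phi\vert\eta\phi\rangle=0$, the norms are inequivalent, and $\mathscr{V}$ fails to be complete under $\Vert\cdot\Vert_\eta$. I would therefore make explicit that a valid metric operator is understood to be uniformly positive, $\eta\ge cI$ for some $c>0$ (equivalently, invertible with bounded inverse), which is exactly the hypothesis under which the $\eta^{\nicefrac{-1}{2}}$ used elsewhere is a well-defined bounded operator. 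In finite dimensions this issue evaporates, since a positive-definite self-adjoint operator there has a strictly positive smallest eigenvalue $\lambda_{\min}$ and one may simply take $m=\lambda_{\min}$; indeed completeness is then automatic because every finite-dimensional normed space is complete.
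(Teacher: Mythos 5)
Your proof is correct, but there is no step-by-step argument in the paper to compare it against: the paper states this theorem without proof, attributing it to Appendix~A of Ref.~\cite{SGH92}. Your verification of claim~1 (sesquilinearity inherited from $\langle\,\cdot\,\vert\,\cdot\,\rangle$, conjugate symmetry from $\eta=\eta^\dagger$, positive-definiteness of the form from pointwise positive-definiteness of $\eta$, hence non-degeneracy) is the standard one and is sound. For claim~2, your route --- establish the two-sided bound $mI\le\eta\le\|\eta\|\,I$ with $m=\|\eta^{-1}\|^{-1}>0$, conclude that $\|\cdot\|$ and $\|\cdot\|_\eta$ are equivalent norms, and transport completeness across the equivalence --- is also correct, but only under the extra hypothesis you introduce, namely that $\eta$ is boundedly invertible (equivalently uniformly positive, $\eta\ge cI$ for some $c>0$).

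That extra hypothesis is not cosmetic; the subtlety you flag is a genuine defect of the statement as written. With positive-definiteness defined pointwise, as the supplement does, claim~2 is false in infinite dimensions: take $\mathscr{H}=\ell^2$ and $\eta e_n=n^{-1}e_n$. This $\eta$ is bounded, self-adjoint and positive-definite, and $\langle\,\cdot\,\vert\,\eta\,\cdot\,\rangle$ is a genuine inner product (so claim~1 survives), yet the truncations $\sum_{n\le N}n^{-\nicefrac{1}{2}}e_n\in\ell^2$ are $\|\cdot\|_\eta$-Cauchy with no $\|\cdot\|_\eta$-limit in $\ell^2$ (coordinate functionals are $\|\cdot\|_\eta$-continuous, so any limit would have to be $(n^{-\nicefrac{1}{2}})_n\notin\ell^2$); hence $(\mathscr{V},\|\cdot\|_\eta)$ is incomplete. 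The paper implicitly leans on its subsequent lemma --- also quoted from Ref.~\cite{SGH92} without proof --- asserting that every bounded self-adjoint positive-definite operator has a bounded inverse; that assertion fails for the same counterexample, since $0$ lies in the spectrum without being an eigenvalue. So your proposal does more than reproduce the cited result: it isolates the precise hypothesis (bounded invertibility of $\eta$) under which the theorem is actually true, which is the hypothesis used tacitly throughout the paper, e.g.\ in treating $R=\eta^{\nicefrac{-1}{2}}$ as a bounded isometry. In finite dimensions, as you note, the issue evaporates and both claims are immediate.
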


\subsection{Constructing a \texorpdfstring{${}^{*}$}{*}-representation on the new Hilbert space}

We now construct a ${}^*$-representation of the algebra $\mathcal{A}$ on the new Hilbert space $\mathscr{H}_\eta$ constructed in Theorem~\ref{thm:Heta}.
In the following, $\mathscr{H}$ and $\mathscr{H}_\eta$ are two Hilbert spaces with their inner product related by
the metric operator $\eta$ as in Theorem~\ref{thm:Heta}, $\mathcal{A}$ is a C${}^*$ algebra of operators and
$\pi:\mathcal{A} \to \mathcal{B}(\mathscr{H})$ is a ${}^*$-representation of $\mathcal{A}$.
We first establish some results required for constructing such a new representation. The following lemma,
which establishes the inverse of the metric operator $\eta$, is adapted from the Appendix A of Ref.~\cite{SGH92}.
\begin{lemma}
Any self-adjoint and positive-definite operator $\eta \in \mathcal{B}(\mathscr{H})$
is invertible. 
Furthermore, the inverse $\eta^{-1}\in \mathcal{B}(\mathscr{H})$ is self-adjoint and positive-definite.
\end{lemma}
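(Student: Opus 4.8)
The plan is to prove invertibility first and then obtain the self-adjointness and positive-definiteness of the inverse as short consequences. A subtlety must be flagged at the outset: in the possibly infinite-dimensional setting, the strict inequality $\braket{\phi|\eta|\phi}>0$ for all $\ket\phi\neq0$ does \emph{not} by itself force $\eta$ to be boundedly invertible (e.g.\ on $\ell^2$ the diagonal operator with entries $1/n$ is strictly positive yet has $0$ in its spectrum). The property actually needed is \emph{uniform} positivity, namely the existence of $c>0$ with $\braket{\phi|\eta|\phi}\ge c\,\|\ket\phi\|^2$ for all $\ket\phi$. This is exactly the notion of metric-operator positivity used in Ref.~\cite{SGH92}, and it is the same uniform bound that makes the new norm equivalent to the old one and hence renders $\mathscr{H}_\eta$ complete in Theorem~\ref{thm:Heta}; so it is available to us here.

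First I would establish invertibility. Injectivity is immediate: if $\eta\ket\psi=0$ then $\braket{\psi|\eta|\psi}=0$, forcing $\ket\psi=0$. For the lower bound, Cauchy--Schwarz gives $\|\eta\ket\psi\|\,\|\ket\psi\|\ge\braket{\psi|\eta|\psi}\ge c\,\|\ket\psi\|^2$, hence
\begin{equation}
\|\eta\ket\psi\|\ge c\,\|\ket\psi\|\quad\forall\ket\psi\in\mathscr{H}.
\end{equation}
This shows $\eta$ has closed range. Since $\eta^\dagger=\eta$, the orthogonal complement of the range is $\ker\eta=\{0\}$, so the range is also dense; being closed and dense, it is all of $\mathscr{H}$. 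Thus $\eta$ is a bijection, and the displayed bound yields $\|\eta^{-1}\ket\phi\|\le c^{-1}\|\ket\phi\|$, so $\eta^{-1}\in\mathcal{B}(\mathscr{H})$.

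With invertibility in hand, self-adjointness of the inverse is purely algebraic: taking adjoints in $\eta\,\eta^{-1}=\eta^{-1}\eta=\mathbbm{1}$ and using $\eta^\dagger=\eta$ shows $(\eta^{-1})^\dagger$ is a two-sided inverse of $\eta$, so $(\eta^{-1})^\dagger=\eta^{-1}$ by uniqueness. For positive-definiteness of $\eta^{-1}$ I would change variables: given $\ket\phi\neq0$, set $\ket\psi=\eta^{-1}\ket\phi\neq0$, so that $\ket\phi=\eta\ket\psi$ and, using $\bra\phi=\bra\psi\eta$ (valid since $\eta^\dagger=\eta$),
\begin{equation}
\braket{\phi|\eta^{-1}|\phi}=\braket{\psi|\eta\,\eta^{-1}\,\eta|\psi}=\braket{\psi|\eta|\psi}>0.
\end{equation}
The same computation with the uniform bound gives $\braket{\phi|\eta^{-1}|\phi}\ge(c/\|\eta\|^2)\,\|\ket\phi\|^2$, so $\eta^{-1}$ is uniformly positive as well and qualifies as a metric operator in its own right.

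I expect the only genuine obstacle to be the infinite-dimensional boundedness-below step, i.e.\ passing from strict to uniform positivity (equivalently, verifying $0\notin\operatorname{spec}(\eta)$). Once the uniform bound is adopted—whether taken as the definition of metric-operator positivity or extracted from the spectral theorem for bounded self-adjoint operators via the continuity of $x\mapsto 1/x$ on $\operatorname{spec}(\eta)\subset[c,\|\eta\|]$—every remaining step is a short algebraic manipulation. In finite dimensions the distinction evaporates, since strict positivity of all eigenvalues already guarantees a positive spectral gap.
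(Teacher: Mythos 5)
Your proof is correct, but it cannot be compared step-by-step with the paper's, because the paper does not actually prove this lemma: it is stated as ``adapted from Appendix A of Ref.~\cite{SGH92}'' and left uncited in detail, with no argument given. What you supply is the standard operator-theoretic proof --- boundedness below plus self-adjointness gives closed and dense range, hence bijectivity with a bounded inverse; adjoint algebra gives $(\eta^{-1})^\dagger=\eta^{-1}$; and the substitution $\ket{\phi}=\eta\ket{\psi}$ gives positive-definiteness of the inverse --- and every step is sound. More importantly, the caveat you flag at the outset is a genuine observation about the lemma as stated. The paper's own Definition 2 takes ``positive-definite'' to mean only $\braket{\phi|\eta|\phi}>0$ for $\ket{\phi}\neq 0$, and the supplement explicitly allows $\mathscr{H}$ to be infinite dimensional; under exactly those hypotheses the lemma is false, as your example $\eta=\operatorname{diag}(1,\tfrac12,\tfrac13,\dots)$ on $\ell^2$ shows (bounded, self-adjoint, strictly positive, yet $0\in\operatorname{spec}(\eta)$, so $\eta^{-1}$ is unbounded). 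The uniform bound $\eta\ge c\,\pi(I)$ for some $c>0$ (equivalently $0\notin\operatorname{spec}(\eta)$) that you adopt is not an optional convenience: the same bound is what makes the $\eta$-norm equivalent to the original norm, without which part 2 of Theorem~\ref{thm:Heta} (completeness of $\mathscr{H}_\eta$) also fails for your counterexample, so it is the hypothesis the whole construction tacitly requires, whatever the precise wording in Ref.~\cite{SGH92}. In short, your proposal both fills in the argument the paper outsources to a citation and identifies a hypothesis that should be added for the statement to hold in infinite dimensions; the one improvement I would ask for is to promote that strengthened positivity assumption from a remark inside the proof to an explicit hypothesis of the lemma (in finite dimensions, as you note, the two notions coincide and nothing changes).
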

We next show that the bounded operator spaces on $\mathscr{H}$ and $\mathscr{H}_\eta$ coincide.
\begin{lemma}\label{lem:boundedopspaces}
    $M\in\mathcal{B}\left(\mathscr{H}\right)$ if and only if $M\in\mathcal{B}\left(\mathscr{H}_\eta\right)$.
\end{lemma}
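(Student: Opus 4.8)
The plan is to reduce the statement to the equivalence of the two norms induced on the common underlying vector space $\mathscr{V}$ by $\langle\,\vert\,\rangle$ and $\langle\,\vert\,\rangle_\eta$. A linear operator $M$ is the \emph{same} map $\mathscr{V}\to\mathscr{V}$ in either picture, since $\mathscr{H}$ and $\mathscr{H}_\eta$ share the vector space $\mathscr{V}$; the only property that can differ between membership in $\mathcal{B}(\mathscr{H})$ and in $\mathcal{B}(\mathscr{H}_\eta)$ is boundedness, and boundedness is preserved under passage to an equivalent norm. So the whole lemma follows once I show the two norms are equivalent.

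First I would establish a two-sided norm bound. Writing $\|v\|_\eta^2 = \braket{v|\eta|v}$, the upper bound is immediate: $\braket{v|\eta|v} \le \|\eta\|\,\braket{v|v}$, giving $\|v\|_\eta \le \sqrt{\|\eta\|}\,\|v\|$. For the lower bound I would invoke the preceding lemma, which guarantees that $\eta^{-1}\in\mathcal{B}(\mathscr{H})$ exists and is self-adjoint and positive-definite. By the spectral theorem, a bounded positive-definite operator whose inverse is also bounded satisfies $\eta \ge \|\eta^{-1}\|^{-1} I$, because $\inf\operatorname{spec}(\eta) = 1/\sup\operatorname{spec}(\eta^{-1}) = 1/\|\eta^{-1}\| > 0$. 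Hence $\|v\|_\eta^2 \ge \|\eta^{-1}\|^{-1}\|v\|^2$, i.e. $\|v\| \le \sqrt{\|\eta^{-1}\|}\,\|v\|_\eta$. The two inequalities together exhibit the norms as equivalent.

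Then I would transfer boundedness by chaining. Assuming $M\in\mathcal{B}(\mathscr{H})$, so that $\|Mv\|\le\|M\|\,\|v\|$, I compute $\|Mv\|_\eta \le \sqrt{\|\eta\|}\,\|Mv\| \le \sqrt{\|\eta\|}\,\|M\|\,\|v\| \le \sqrt{\|\eta\|\,\|\eta^{-1}\|}\,\|M\|\,\|v\|_\eta$, so $M\in\mathcal{B}(\mathscr{H}_\eta)$. The converse is symmetric: since $\eta^{-1}$ is itself bounded, self-adjoint, and positive-definite, and since $\langle\,\vert\,\rangle$ is recovered from $\langle\,\vert\,\rangle_\eta$ through the metric $\eta^{-1}$, the identical argument with the roles of $\eta$ and $\eta^{-1}$ (and of $\mathscr{H}$ and $\mathscr{H}_\eta$) interchanged yields the reverse implication.

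The main obstacle is the spectral lower bound $\eta\ge\|\eta^{-1}\|^{-1}I$ in the possibly infinite-dimensional setting; everything else is a routine chaining of inequalities. This lower bound is precisely where the hypothesis that $\eta^{-1}$ is \emph{bounded}—not merely that $\eta$ is positive-definite—carries the weight, as it prevents $\operatorname{spec}(\eta)$ from accumulating at $0$, which would otherwise destroy the equivalence of norms and allow an operator bounded in one picture to be unbounded in the other.
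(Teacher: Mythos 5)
Your proof is correct, but it takes a somewhat different route from the paper's. The paper works directly at the level of operator norms: it starts from the identity $\|M\|_\eta = \|\eta^{1/2}M\eta^{-1/2}\|$ (a consequence of $\|v\|_\eta = \|\eta^{1/2}v\|$, i.e.\ of $\eta^{1/2}$ being an isometry from $\mathscr{H}_\eta$ to $\mathscr{H}$), and then gets each implication in one line from submultiplicativity: $\|M\|_\eta \le \|\eta^{1/2}\|\cdot\|M\|\cdot\|\eta^{-1/2}\|$ and $\|M\| = \|\eta^{-1/2}(\eta^{1/2}M\eta^{-1/2})\eta^{1/2}\| \le \|\eta^{-1/2}\|\cdot\|M\|_\eta\cdot\|\eta^{1/2}\|$. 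You instead prove equivalence of the two vector norms, with the lower bound coming from the spectral inequality $\eta \ge \|\eta^{-1}\|^{-1}I$, and then transfer boundedness by chaining inequalities on vectors. Your version is more elementary and self-contained in that it never introduces $\eta^{\pm 1/2}$ or the conjugation identity (which the paper asserts without derivation), at the cost of invoking the spectral theorem to justify $\inf\operatorname{spec}(\eta) = 1/\|\eta^{-1}\|$. Both arguments rest on exactly the same hypotheses --- boundedness of $\eta$ \emph{and} of $\eta^{-1}$, i.e.\ the paper's preceding lemma --- and even yield the same constant, since $\|\eta^{1/2}\|\cdot\|\eta^{-1/2}\| = \sqrt{\|\eta\|\,\|\eta^{-1}\|}$. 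Two small remarks: your converse does not actually need the role-swap of $\eta$ and $\eta^{-1}$, since the two-sided norm equivalence you established is already symmetric and gives both implications directly; and your closing observation --- that bounded invertibility is precisely what prevents $\operatorname{spec}(\eta)$ from accumulating at $0$ and is where the lemma could fail in infinite dimensions --- is a point the paper's proof uses silently (through finiteness of $\|\eta^{-1/2}\|$) but never makes explicit.
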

\begin{proof}
    Let $\|\bullet\|,\|\bullet\|_\eta$ respectively denote the operator norms in~$\mathscr{H}$, $\mathscr{H}_\eta$.
    From Eq.~\eqref{eq:newip}, 
    \begin{equation}\label{eq:twonorms}
        \left\|M\right\|_\eta = \left\|\eta^{\nicefrac{1}{2}}M\eta^{\nicefrac{-1}{2}}\right\|,\, \forall\, M\in\mathcal{B}(\mathscr{H}).
    \end{equation}
    If $M\in \mathcal{B}\left(\mathscr{H}\right)$, then
    \begin{equation}
        \left\|M\right\|_\eta \leq \left\|\eta^{\nicefrac{1}{2}}\right\|\cdot\left\|M\right\|\cdot\left\|\eta^{\nicefrac{-1}{2}}\right\|<\infty
    \end{equation}
    and therefore, $M\in \mathcal{B}\left(\mathscr{H}_\eta\right)$. 
    To verify the reverse implication, note that
    \begin{eqnarray}
        \left\|M\right\| &=& \left\|\eta^{\nicefrac{-1}{2}}\left(\eta^{\nicefrac{1}{2}}M\eta^{\nicefrac{-1}{2}}\right)\eta^{\nicefrac{1}{2}}\right\|\nonumber \\
        &\leq & \left\|\eta^{\nicefrac{-1}{2}}\right\|\cdot \left\|M\right\|_\eta\cdot  \left\|\eta^{\nicefrac{1}{2}}\right\| 
        <\infty,
    \end{eqnarray}
     for all $M\in\mathcal{B}\left(\mathscr{H}_\eta\right) $.
\end{proof}
The next lemma relates ${}^\dagger$ to ${}^\ddagger$, with the latter denoting the adjoint with respect to the inner product $\braket{\bullet|\bullet}'$ of $\mathscr{H}_\eta$.
\begin{lemma}
For any $M \in \mathcal{B}(\mathscr{H})$, $M^\ddagger = \eta^{-1}M^\dagger \eta$. Additionally, $\eta^\ddagger = \eta$.
\end{lemma}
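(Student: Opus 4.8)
The plan is to work directly from the definition of the adjoint $\ddagger$ taken with respect to the inner product $\langle\,\vert\,\rangle_\eta$ of $\mathscr{H}_\eta$, translate every occurrence of $\langle\,\vert\,\rangle_\eta$ back into the original inner product via $\langle a\vert b\rangle_\eta=\langle a\vert\eta b\rangle$, and then read off an operator identity. The cleanest route is a \emph{verify-and-uniqueness} argument: propose the candidate $\eta^{-1}M^\dagger\eta$ and check it satisfies the defining relation of $M^\ddagger$.

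First I would establish well-definedness. By Lemma~\ref{lem:boundedopspaces}, $\mathcal{B}(\mathscr{H})$ and $\mathcal{B}(\mathscr{H}_\eta)$ coincide, so $M\in\mathcal{B}(\mathscr{H})$ guarantees $M\in\mathcal{B}(\mathscr{H}_\eta)$; hence $M^\ddagger\in\mathcal{B}(\mathscr{H}_\eta)$ exists and is unique. Moreover the candidate $\eta^{-1}M^\dagger\eta$ is itself bounded, since $\eta$, its inverse $\eta^{-1}$ (by the preceding lemma), and $M^\dagger$ are all bounded. The core computation then proceeds for arbitrary $\ket\phi,\ket\psi\in\mathscr{V}$: expanding $\langle(\eta^{-1}M^\dagger\eta)\phi\vert\psi\rangle_\eta=\langle(\eta^{-1}M^\dagger\eta)\phi\vert\eta\psi\rangle$ and pushing the operator across using the ordinary adjoint gives $\langle\phi\vert(\eta^{-1}M^\dagger\eta)^\dagger\eta\psi\rangle$, which, upon applying $\eta^\dagger=\eta$ and $(\eta^{-1})^\dagger=\eta^{-1}$, collapses to $\langle\phi\vert\eta M\psi\rangle=\langle\phi\vert M\psi\rangle_\eta$. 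Since this is precisely the defining property $\langle M^\ddagger\phi\vert\psi\rangle_\eta=\langle\phi\vert M\psi\rangle_\eta$, uniqueness of the adjoint forces $M^\ddagger=\eta^{-1}M^\dagger\eta$. The second assertion follows immediately by setting $M=\eta$ and using $\eta^\dagger=\eta$, which yields $\eta^\ddagger=\eta^{-1}\eta^\dagger\eta=\eta$.

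The only genuine subtlety, and thus the main point to get right, is the boundedness bookkeeping: to invoke uniqueness of the adjoint I must already know $M^\ddagger$ exists in $\mathcal{B}(\mathscr{H}_\eta)$, which is exactly why Lemma~\ref{lem:boundedopspaces} is needed at the outset; without it the formula $\eta^{-1}M^\dagger\eta$ would be only a formal adjoint with no guarantee of representing the genuine one. A fully equivalent alternative, which I would mention as a sanity check, avoids uniqueness altogether: equate $\langle\phi\vert\eta M\psi\rangle=\langle\phi\vert(M^\ddagger)^\dagger\eta\psi\rangle$ for all $\phi,\psi$, invoke non-degeneracy of $\langle\,\vert\,\rangle$ to deduce the operator equality $\eta M=(M^\ddagger)^\dagger\eta$, and apply $\dagger$. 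This route trades the appeal to uniqueness for an appeal to non-degeneracy but requires the same boundedness input, so both arguments rest on the earlier lemmas and involve no further obstacle beyond careful tracking of which adjoint ($\dagger$ versus $\ddagger$) appears at each step.
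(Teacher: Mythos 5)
Your proposal is correct and takes essentially the same approach as the paper's proof: both work directly from the defining relation of $\ddagger$ on $\mathscr{H}_\eta$, translate via $\langle\,\cdot\,\vert\,\cdot\,\rangle_\eta = \langle\,\cdot\,\vert\,\eta\,\cdot\,\rangle$ to the original inner product, apply the ordinary $\dagger$ together with self-adjointness of $\eta$ and $\eta^{-1}$, and conclude $M^\ddagger = \eta^{-1}M^\dagger\eta$, with $\eta^\ddagger=\eta$ by substitution. Your verify-the-candidate-plus-uniqueness packaging (and the explicit boundedness bookkeeping via Lemma~\ref{lem:boundedopspaces}) is just the paper's chain of equalities run in the opposite direction, with the implicit uniqueness step made explicit.
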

\begin{proof}
By definition of $\ddagger$, we have 
\begin{eqnarray}
\label{eq:ddagger}
  \langle \phi|M|\psi\rangle_\eta = \overline{\langle \psi|M^\ddagger|\phi\rangle_\eta},\,
  &\forall&\, \ket{\psi},\ket{\phi}\in \mathscr{V}, \nonumber\\
  &\forall& \, M\in \mathcal{B}(\mathscr{H}).  
\end{eqnarray}
Using Eq.~\eqref{eq:newip}, 
\begin{eqnarray}
\label{eq:ddaggerformula}
    \langle \phi|M|\psi\rangle_\eta &=  \langle \phi|\eta M|\psi\rangle = \overline{\langle\psi |M^\dagger\eta|\phi\rangle} =  
    \overline{\langle\psi |\eta^{-1}M^\dagger\eta|\phi\rangle_\eta}, \nonumber\\
    &\forall\, \ket{\psi},\ket{\phi}\in \mathscr{V}, \, \forall \, M\in \mathcal{B}\left(\mathscr{H}\right).
\end{eqnarray}
Then $M^\ddagger = \eta^{-1}M^\dagger\eta$ follows from the comparison of Eq.~\eqref{eq:ddagger} and Eq.~\eqref{eq:ddaggerformula},
and $\eta^\ddagger = \eta$ can be obtained by substituting $M=\eta$ in this relation.
\end{proof}

We are now ready for the construction of a ${}^*$-representation of $\mathcal{A}$ on $\mathscr{H}_\eta$.
\begin{theorem}
\label{lem:pieta}
Let $\pi:\mathcal{A} \to \mathcal{B}(\mathscr{H})$ be a ${}^*$-representation of a C${}^*$ algebra $\mathcal{A}$.
Then $\pi_{\eta}: \mathcal{A} \to \mathcal{L}(\mathscr{H}_\eta): A\mapsto \eta^{\nicefrac{-1}{2}}\pi(A) \eta^{\nicefrac{1}{2}} $
is a ${}^*$-representation of $\mathcal{A}$ on $\mathscr{H}_\eta$.
\end{theorem}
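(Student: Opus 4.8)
The plan is to verify directly that $\pi_\eta$ satisfies the three defining properties of a ${}^*$-representation on $\mathscr{H}_\eta$ [cf.\ Eq.~\eqref{eq:pi*dagger}]: that it maps into $\mathcal{B}(\mathscr{H}_\eta)$, that it is a linear algebra homomorphism, and that it intertwines the ${}^*$ operation of $\mathcal{A}$ with the $\mathscr{H}_\eta$-adjoint $\ddagger$. Boundedness is immediate: since $\eta$ is self-adjoint and positive-definite, the continuous functional calculus furnishes bounded, self-adjoint, positive-definite operators $\eta^{\nicefrac{1}{2}}$ and $\eta^{\nicefrac{-1}{2}}$, so $\pi_\eta(A) = \eta^{\nicefrac{-1}{2}}\pi(A)\eta^{\nicefrac{1}{2}}$ is a product of bounded operators on $\mathscr{H}$ and therefore lies in $\mathcal{B}(\mathscr{H})$, which by Lemma~\ref{lem:boundedopspaces} coincides with $\mathcal{B}(\mathscr{H}_\eta)$.

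Linearity of $\pi_\eta$ is inherited from linearity of $\pi$ together with the linearity of left and right multiplication by the fixed operators $\eta^{\nicefrac{-1}{2}},\eta^{\nicefrac{1}{2}}$. For multiplicativity I would insert the identity $\eta^{\nicefrac{1}{2}}\eta^{\nicefrac{-1}{2}} = \pi(I)$ between the two factors produced by the homomorphism property of $\pi$,
\[
\pi_\eta(AB) = \eta^{\nicefrac{-1}{2}}\pi(A)\pi(B)\eta^{\nicefrac{1}{2}} = \eta^{\nicefrac{-1}{2}}\pi(A)\eta^{\nicefrac{1}{2}}\eta^{\nicefrac{-1}{2}}\pi(B)\eta^{\nicefrac{1}{2}} = \pi_\eta(A)\pi_\eta(B),
\]
from which $\pi_\eta(I) = \eta^{\nicefrac{-1}{2}}\pi(I)\eta^{\nicefrac{1}{2}}$ being the identity on $\mathscr{H}_\eta$ follows as a special case.

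The substantive step is the ${}^*$-compatibility $\pi_\eta(A^*) = (\pi_\eta(A))^\ddagger$. Here I would apply the preceding lemma $M^\ddagger = \eta^{-1}M^\dagger\eta$ to $M=\pi_\eta(A)$ and then use that $\eta^{\nicefrac{\pm 1}{2}}$ are self-adjoint, so that the $\dagger$ of the conjugated operator reverses the order of the square-root factors,
\[
(\pi_\eta(A))^\ddagger = \eta^{-1}\left(\eta^{\nicefrac{-1}{2}}\pi(A)\eta^{\nicefrac{1}{2}}\right)^\dagger\eta = \eta^{-1}\,\eta^{\nicefrac{1}{2}}(\pi(A))^\dagger\eta^{\nicefrac{-1}{2}}\,\eta = \eta^{\nicefrac{-1}{2}}(\pi(A))^\dagger\eta^{\nicefrac{1}{2}}.
\]
Invoking that $\pi$ is itself a ${}^*$-representation, $(\pi(A))^\dagger = \pi(A^*)$, the right-hand side becomes $\eta^{\nicefrac{-1}{2}}\pi(A^*)\eta^{\nicefrac{1}{2}} = \pi_\eta(A^*)$, which closes the argument.

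I do not anticipate a genuine obstacle, since all the conceptual content has already been packaged into the auxiliary lemmas---specifically the identification $M^\ddagger = \eta^{-1}M^\dagger\eta$ of the $\mathscr{H}_\eta$-adjoint and the invertibility and self-adjointness of $\eta^{\nicefrac{\pm 1}{2}}$. The only point requiring care is the bookkeeping of the square-root factors in the $\ddagger$-computation: one must check that the outer $\eta^{-1}$ and $\eta$ from the adjoint formula combine with the $\eta^{\nicefrac{\pm 1}{2}}$ emerging from the $\dagger$ so as to leave precisely the conjugation $\eta^{\nicefrac{-1}{2}}(\,\cdot\,)\eta^{\nicefrac{1}{2}}$ defining $\pi_\eta$.
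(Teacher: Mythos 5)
Your proposal is correct and follows essentially the same route as the paper's proof: range via Lemma~\ref{lem:boundedopspaces}, multiplicativity by inserting $\eta^{\nicefrac{1}{2}}\eta^{\nicefrac{-1}{2}}$, and ${}^*$-compatibility via the adjoint relation $M^\ddagger = \eta^{-1}M^\dagger\eta$. The only cosmetic difference is that you run the ${}^*$-compatibility chain from $(\pi_\eta(A))^\ddagger$ to $\pi_\eta(A^*)$, whereas the paper runs the same equalities in the opposite direction.
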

\begin{proof}
The range of the map $\pi_{\eta}$ is $\mathcal{B}(\mathscr{H}_\eta)$, which follows from the fact that 
 $\eta^{1/2},\eta^{-1/2}, \pi(A) \in \mathcal{B}\left(\mathscr{H}_\eta\right)$, following Lemma~\ref{lem:boundedopspaces}
 and definition of $\pi$, for all $A\in\mathcal{A}$. 
The map $\pi_{\eta}$ is linear by construction, and it is product preserving
because $\pi_{\eta}(AB) = \eta^{\nicefrac{-1}{2}}\pi(AB)\eta^{\nicefrac{1}{2}} = \eta^{\nicefrac{-1}{2}}\pi(A)\left(\eta^{\nicefrac{1}{2}}\eta^{\nicefrac{-1}{2}}\right)\pi(B)\eta^{\nicefrac{1}{2}} = \pi_{\eta}(A)\pi_{\eta}(B)$. Therefore $\pi_\eta$ is a representation. 
The representation $\pi_{\eta}$ is also ${}^{*}$-preserving, as 
\begin{eqnarray}
    \pi_{\eta}(A^*) &=& \eta^{\nicefrac{-1}{2}}\pi\left(A^*\right)\eta^{\nicefrac{1}{2}} = \eta^{\nicefrac{-1}{2}}\pi(A)^\dagger \eta^{\nicefrac{1}{2}} 
    = \eta^{\nicefrac{1}{2}}\pi(A)^\ddagger \eta^{\nicefrac{-1}{2}} \nonumber \\
    &=&\left(\eta^{\nicefrac{-1}{2}}\pi(A) \eta^{\nicefrac{1}{2}}\right)^\ddagger = \pi_{\eta}(A)^\ddagger.  
\end{eqnarray}
Therefore $\pi_{\eta}$ is a  ${}^{*}$-representation of $\mathcal{A}$ on~$\mathscr{H}_\eta$.
\end{proof}

\subsection{Representing states on the new Hilbert space}
\label{sec:liftmap}
We now characterize the set of states represented by the set of density operators~$\mathcal{D}\left(\mathscr{H}_\eta\right)$
and construct vector representation of the pure states under $\pi_\eta$.
Recall that ${}^{\#}\pi:\rho \mapsto \omega$ such that $\omega(A) = \text{tr}(\rho \pi(A))\  \forall A \in \mathcal{A}$. 
The map ${}^{\#}\pi_\eta$ is defined analogously for the $\pi_\eta$ representation. 
In the following lemma
we show how the density operators acting on $\mathscr{H}$ and $\mathscr{H}_\eta$ are related,
which we further use to prove that the set of states represented under 
$\pi_\eta$ coincides with that represented under $\pi$, namely $\mathcal{S}$.

\begin{lemma}
   An operator $\rho_\eta \in\mathcal{D}\left(\mathscr{H}_\eta\right) $ if and only if
   \begin{equation}\label{eq:rho_eta}
       \rho_\eta = \eta^{\nicefrac{-1}{2}}\rho \eta^{\nicefrac{1}{2}}
   \end{equation}
    for some  $\rho \in\mathcal{D}\left(\mathscr{H}\right)$.
    \end{lemma}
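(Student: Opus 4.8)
The plan is to prove both directions of the biconditional by direct conjugation, relying on three facts. Two are already available: that $\mathcal{B}(\mathscr{H})$ and $\mathcal{B}(\mathscr{H}_\eta)$ coincide as operator spaces (Lemma~\ref{lem:boundedopspaces}), and that the $\mathscr{H}_\eta$-adjoint satisfies $M^\ddagger = \eta^{-1}M^\dagger\eta$. The third fact, which I would state explicitly before the main argument, is that the trace functional does not depend on the inner product: although positivity and self-adjointness genuinely differ between $\mathscr{H}$ and $\mathscr{H}_\eta$, the trace is a similarity invariant, so evaluating $\mathrm{tr}$ with an $\mathscr{H}$-orthonormal basis or with an $\mathscr{H}_\eta$-orthonormal basis returns the same number; equivalently, $\mathrm{tr}$ is cyclic and hence invariant under conjugation by $\eta^{\pm\nicefrac12}$.

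For the forward direction I would assume $\rho_\eta\in\mathcal{D}(\mathscr{H}_\eta)$, set $\rho:=\eta^{\nicefrac12}\rho_\eta\eta^{\nicefrac{-1}{2}}$ so that Eq.~\eqref{eq:rho_eta} holds, and verify the three defining conditions of $\mathcal{D}(\mathscr{H})$. Self-adjointness follows by rewriting $\rho_\eta^\ddagger=\rho_\eta$ as $\rho_\eta^\dagger=\eta\rho_\eta\eta^{-1}$ and substituting into $\rho^\dagger=\eta^{\nicefrac{-1}{2}}\rho_\eta^\dagger\eta^{\nicefrac12}$, which collapses back to $\rho$. For positivity I would evaluate $\braket{\psi|\rho|\psi}$ on an arbitrary $\ket\psi\in\mathscr{H}$, substitute $\ket\phi=\eta^{\nicefrac{-1}{2}}\ket\psi$, and recognise the result as $\braket{\phi|\rho_\eta|\phi}_\eta\ge0$ via Eq.~\eqref{eq:newip}. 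The trace condition is then immediate from cyclicity and inner-product independence of the trace: $\mathrm{tr}\,\rho=\mathrm{tr}\,\rho_\eta\le1$.

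The reverse direction runs the same computation backwards: assuming $\rho\in\mathcal{D}(\mathscr{H})$ and $\rho_\eta=\eta^{\nicefrac{-1}{2}}\rho\eta^{\nicefrac12}$, I would check $\rho_\eta^\ddagger=\rho_\eta$ using $\rho^\dagger=\rho$, then $\braket{\phi|\rho_\eta|\phi}_\eta=\braket{\eta^{\nicefrac12}\phi|\rho|\eta^{\nicefrac12}\phi}\ge0$ for positivity, and finally $\mathrm{tr}_\eta\,\rho_\eta=\mathrm{tr}\,\rho\le1$. I do not anticipate a serious difficulty in either conjugation calculation; the only point requiring genuine care, and the one I would flag as the main obstacle, is the trace invariance, since a reader might reasonably worry that $\mathcal{D}(\mathscr{H}_\eta)$ is defined with a trace taken intrinsically in $\mathscr{H}_\eta$. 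Once that subtlety is dispatched — and, in the infinite-dimensional case, once one notes that conjugation by the bounded invertible operators $\eta^{\pm\nicefrac12}$ preserves the trace class — the argument reduces to the two routine conjugation identities above.
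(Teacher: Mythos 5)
Your proof is correct and runs on the same engine as the paper's---conjugation by $\eta^{\pm\nicefrac{1}{2}}$ together with inner-product independence of the trace---but the two arguments distribute their care differently, in a complementary way. The paper's proof is built entirely on an explicit basis correspondence: it shows that $\{\ket{f_i}:=\eta^{\nicefrac{-1}{2}}\ket{e_i}\}$ is an orthonormal basis of $\mathscr{H}_\eta$ whenever $\{\ket{e_i}\}$ is one of $\mathscr{H}$, and then computes in that basis that $\rho_\eta$ is Hilbert--Schmidt and that $\operatorname{tr}(\rho_\eta)=\operatorname{tr}(\rho)$, which settles the trace-class issue. That basis computation is precisely the rigorous content of the ``third fact'' you propose to state up front, and it is the right way to prove it: bare cyclicity of the trace needs trace-class hypotheses in infinite dimensions, and the worry you correctly flag (that $\mathcal{D}(\mathscr{H}_\eta)$ uses the trace taken intrinsically in $\mathscr{H}_\eta$) is dispatched exactly by summing $\braket{f_i|\rho_\eta|f_i}_\eta$ over the transported basis, which for positive operators also yields the trace-class property directly. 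In the other direction, your outline is more thorough than the paper's: you explicitly verify $\ddagger$-self-adjointness from $M^\ddagger=\eta^{-1}M^\dagger\eta$ and $\eta$-positivity via the substitution $\ket{\phi}=\eta^{\nicefrac{-1}{2}}\ket{\psi}$, giving $\braket{\psi|\rho|\psi}=\braket{\phi|\rho_\eta|\phi}_\eta$; these are two defining conditions of $\mathcal{D}(\mathscr{H}_\eta)$ that the paper's proof leaves implicit, checking only the Hilbert--Schmidt and trace conditions. Both of your conjugation computations are correct as set up, so once your trace-invariance lemma is proved by the paper's basis argument, your proposal is a complete---indeed slightly more complete---proof of the lemma.
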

\begin{proof}
Let $\{\ket{e_i}\}$ be an orthonormal basis of $\mathscr{H}$. We first show that $\{\ket{f_i}:= \eta^{\nicefrac{-1}{2}}\ket{e_i}\}$
is an orthonormal basis of $\mathscr{H}_\eta$. The orthonormality of $\{\ket{f_i}\}$ follows from 
$\braket{f_i|f_j}' = \braket{e_i|\eta^{\nicefrac{-1}{2}}\eta\eta^{\nicefrac{-1}{2}} |e_j}' = \delta_{ij}$. 
We prove that $\{\ket{f_i}\}$ is a basis by showing that any $\ket{\phi} \in \mathscr{V}$ can be expressed
as $\ket{\phi} = \sum_i \braket{f_i|\phi}'\ket{f_i}$. Let $\ket{\psi} = \eta^{\nicefrac{1}{2}}\ket{\phi}$.
Then $\ket{\psi} = \sum_i \braket{e_i|\psi}\ket{e_i}$. Now premultiplying by $\eta^{\nicefrac{-1}{2}}$ and substituting
$\ket{f_i} = \eta^{\nicefrac{-1}{2}}\ket{e_i}$, $\ket{\psi} = \eta^{\nicefrac{1}{2}}\ket{\phi}$ yields 
the desired expression $\ket{\phi} = \sum_i \braket{f_i|\phi}'\ket{f_i}$. Note that this sum is convergent because
the set $\{\ket{f_i}\}$ is orthonormal~\cite{Con07}.

To prove the forward implication, we note that Hilbert Schmidt norm of any $\rho \in\mathcal{D}\left(\mathscr{H}\right)$
is finite, i.e.\ $\sum_{i\in B}\|\rho\ket{e_i}\|^2<\infty$, and
${\rm tr}(\rho) = \sum_{i\in B} \langle e_i|\rho |e_i\rangle<\infty$; 
both these properties follow from $\rho$ being a trace-class operator~\cite{Con07}.
For $\rho_\eta\in\mathcal{B}(\mathscr{H}_\eta)$ satisfying Eq.~\eqref{eq:rho_eta}, 
\begin{eqnarray}
    \sum_{i\in B_\eta}\left\|\rho_\eta\ket{f_i}\right\|_\eta^2 &=& 
    \sum_{i\in B}\left\|\eta^{\nicefrac{-1}{2}}\rho\ket{e_i}\right\|_\eta^2 \nonumber\\ 
    &=& \sum_{i\in B}\langle e_i| (\eta^{\nicefrac{-1}{2}}\rho)^\dagger \eta (\eta^{\nicefrac{-1}{2}}\rho)|e_i\rangle \nonumber\\
    &=& \sum_{i\in B}\left\|\rho\ket{e_i}\right\|^2 <\infty.
\end{eqnarray}
Therefore, $\rho_\eta$ is a Hilbert-Schmidt operator in $\mathcal{B}(\mathscr{H}_\eta)$. 
Furthermore, 
\begin{eqnarray}\label{eq:rho_etatrace}
    {\rm tr}(\rho_\eta) &=& \sum_{i\in B_\eta}\langle f_i|\rho_\eta|f_i\rangle_\eta 
 = \sum_{i\in B}\langle e_i|\eta^{\nicefrac{-1}{2}}\eta \rho_\eta\eta^{\nicefrac{-1}{2}}|e_i\rangle \nonumber\\
   & =& \sum_{i\in B}\langle e_i| \rho|e_i\rangle  = {\rm tr}(\rho).
\end{eqnarray}
Therefore,  $\rho_\eta$ is a trace-class operator with $ {\rm tr}(\rho_\eta)\leq 1$ and hence $\rho_\eta \in\mathcal{D}\left(\mathscr{H}_\eta\right) $.
Similarly, the reverse implication that for any $\rho_\eta \in\mathcal{D}\left(\mathscr{H}_\eta\right) $, the operator
$\eta^{\nicefrac{1}{2}}\rho_\eta \eta^{\nicefrac{-1}{2}} \in \mathcal{D}\left(\mathscr{H}\right)$ is proved by 
starting with an orthonormal basis $\{\ket{f_i}\}$ for $\mathscr{H}_\eta$ and observing that 
$\{\eta^{\nicefrac{1}{2}}\ket{f_i}\}$ is an orthornormal basis for $\mathscr{H}$.

\end{proof}

We now show that both $\mathcal{D}(\mathscr{H})$ and  $\mathcal{D}(\mathscr{H}_\eta)$ represent the same set of states, $\mathcal{S}$,
under the respective~$^*$-representations.
\begin{lemma}
    $\rho_\eta\stackrel{\!\!{}^{\#}\!\pi_\eta}{\to} \omega$ if and only if $\rho\stackrel{\!\!{}^{\#}\!\pi}{\to} \omega$, where $\rho,\rho_\eta$
are related by Eq.~\eqref{eq:rho_eta}.

\end{lemma}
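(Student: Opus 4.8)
The plan is to collapse the biconditional to a single trace identity. By definition, $\rho\stackrel{\!\!{}^{\#}\!\pi}{\to}\omega$ means $\omega(A)={\rm tr}(\rho\,\pi(A))$ for all $A\in\mathcal{A}$, with the trace taken in $\mathscr{H}$, while $\rho_\eta\stackrel{\!\!{}^{\#}\!\pi_\eta}{\to}\omega$ means $\omega(A)={\rm tr}(\rho_\eta\,\pi_\eta(A))$ with the trace taken in $\mathscr{H}_\eta$. Hence it suffices to prove
\[
{\rm tr}\left(\rho_\eta\,\pi_\eta(A)\right) = {\rm tr}\left(\rho\,\pi(A)\right) \quad \forall\, A\in\mathcal{A},
\]
for $\rho,\rho_\eta$ related by Eq.~\eqref{eq:rho_eta}. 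Once this holds, each side equals the same functional $\omega(A)$, so $\rho_\eta$ represents $\omega$ under $\pi_\eta$ exactly when $\rho$ represents $\omega$ under $\pi$, yielding both directions of the iff simultaneously.

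First I would insert the explicit forms $\pi_\eta(A)=\eta^{\nicefrac{-1}{2}}\pi(A)\eta^{\nicefrac{1}{2}}$ from Theorem~\ref{lem:pieta} and $\rho_\eta=\eta^{\nicefrac{-1}{2}}\rho\,\eta^{\nicefrac{1}{2}}$ from Eq.~\eqref{eq:rho_eta}; the inner pair $\eta^{\nicefrac{1}{2}}\eta^{\nicefrac{-1}{2}}$ cancels, leaving $\rho_\eta\,\pi_\eta(A)=\eta^{\nicefrac{-1}{2}}\rho\,\pi(A)\,\eta^{\nicefrac{1}{2}}$. Next I would expand the $\mathscr{H}_\eta$-trace over the orthonormal basis $\{\ket{f_i}=\eta^{\nicefrac{-1}{2}}\ket{e_i}\}$ established in the preceding lemma and rewrite each term through the metric using $\braket{\bullet|\bullet}_\eta=\braket{\bullet|\eta|\bullet}$ (Eq.~\eqref{eq:newip}); the surrounding factors of $\eta^{\nicefrac{1}{2}}$ and $\eta^{\nicefrac{-1}{2}}$ then telescope and the sum reduces to $\sum_i\braket{e_i|\rho\,\pi(A)|e_i}={\rm tr}_{\mathscr{H}}(\rho\,\pi(A))$. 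This is precisely the mechanism already used in the trace computation of Eq.~\eqref{eq:rho_etatrace}, now applied to $\rho\,\pi(A)$ in place of $\rho$.

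The step needing the most care is that the two traces are defined with respect to different inner products and different orthonormal bases, so one cannot appeal to cyclicity of a single trace to commute $\eta^{\nicefrac{1}{2}}$ with $\eta^{\nicefrac{-1}{2}}$ directly. The resolution is exactly the basis change $\ket{e_i}\leftrightarrow\ket{f_i}$ together with the metric factor in $\braket{\bullet|\bullet}_\eta$, which converts the $\mathscr{H}_\eta$-trace into an ordinary $\mathscr{H}$-trace; convergence of the sum is inherited from $\rho$ being trace-class and $\pi(A)$ bounded, as in the preceding lemma. Since the resulting identity is symmetric between the two representations, no independent argument is required for the converse.
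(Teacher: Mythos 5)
Your proof is correct and takes essentially the same route as the paper's: both reduce the biconditional to the single trace identity $\operatorname{tr}\left(\rho_\eta\,\pi_\eta(A)\right)=\operatorname{tr}\left(\rho\,\pi(A)\right)$ for all $A$, obtained by substituting $\pi_\eta(A)=\eta^{\nicefrac{-1}{2}}\pi(A)\eta^{\nicefrac{1}{2}}$ and $\rho_\eta=\eta^{\nicefrac{-1}{2}}\rho\,\eta^{\nicefrac{1}{2}}$. The only difference is one of bookkeeping: the paper dispatches the identity via the cyclic property of the trace (implicitly using that the trace is inner-product independent, as in Eq.~\eqref{eq:rho_etatrace}), whereas you justify the equality of the $\mathscr{H}_\eta$-trace and the $\mathscr{H}$-trace explicitly through the basis $\{\ket{f_i}\}$, which is a slightly more careful rendering of the same step.
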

\begin{proof}
To prove the forward implication, note $\omega(A) = \text{tr}(\rho \pi(A)) \ \forall A$ by the definition of ${}^\#\pi$. Then
$\text{tr}(\rho \pi(A)) = \text{tr}(\rho_\eta \eta^{\nicefrac{-1}{2}}\pi(A) \eta^{\nicefrac{1}{2}})
=\text{tr}(\rho_\eta\pi_\eta(A))$ using the cyclic property of trace and the definition of $\pi_\eta$ respectively.
Therefore $\omega(A) = \text{tr}(\rho_\eta\pi_\eta(A)) \ \forall A$, and therefore, $\rho_\eta\stackrel{\!\!{}^{\#}\!\pi_\eta}{\to} \omega$.
The reverse implication can be proved by following the same steps in reverse order.

\end{proof}

Finally, we represent pure states in $\mathcal{S}$ by vectors in the Hilbert space $\mathscr{H}_\eta$.
Recall that a state $\omega \in \mathcal{S}$ has a vector-representation $\ket{\psi} \in \mathscr{H}$ under $\pi$ if 
\begin{equation}
    \omega(A) = \braket{\psi | \pi(A)|\psi} \;\; 
    \forall A \in \mathcal{A}.
\end{equation}
We now extend this definition to the representation $\pi_\eta$.
\begin{definition}
\label{def:vecrep}
A state $\omega \in \mathcal{S}$ has a vector representation $\ket{\psi} \in \mathscr{H}_\eta$ under $\pi_\eta$ if 
\begin{equation}
    \omega(A) = \braket{\psi | \pi_\eta(A)|\psi}_\eta \;\; \forall A \in \mathcal{A}.
\end{equation}
\end{definition}

Let the ball $\overline{B_1}(\mathscr{H}) := \{\ket{\psi} \in \mathscr{H}: \sqrt{\braket{\psi|\psi}} \le 1\}$ 
denote the set of normalized and subnormalized vectors in $\mathscr{H}$. 
Recall that the map $\operatorname{lift}:\overline{B_1}(\mathscr{H}) \to \mathcal{D}(\mathscr{H}):
\ket{\psi} \mapsto \ket{\psi}\bra{\psi}$
connects the vector representation $\ket{\psi}$ of a pure state $\omega$ to its density operator representation
$\ket{\psi}\bra{\psi}$ under $\pi$. 
We now construct an analogous map $\operatorname{lift}_\eta: \overline{B_1}(\mathscr{H}_\eta) \to \mathcal{D}\left(\mathscr{H}_\eta\right)$
for the representation $\pi_\eta$, where the ball 
$\overline{B_1}(\mathscr{H}_\eta) := \{\ket{\psi} \in \mathscr{H}_\eta: \sqrt{\braket{\psi|\psi}}_\eta \le 1\}$. 
\begin{definition}
The map $\operatorname{lift}_\eta: \overline{B_1}(\mathscr{H}_\eta) \to \mathcal{D}(\mathscr{H}_\eta)$ is defined to be the map that 
satisfies the following condition: for any state $\omega \in \mathcal{S}$ with vector representation
$\ket{\psi} \in \mathscr{H}_\eta$ and density operator representation $\rho_\eta \in \mathcal{D}(\mathscr{H}_\eta)$,
$\operatorname{lift}_\eta:\ket{\psi} \mapsto \rho_\eta$.
\end{definition}
We now derive the explicit action of $\operatorname{lift}_\eta$.
\begin{lemma}
\label{lem:lifteta}
The map $\operatorname{lift}_\eta$ has action 
$\operatorname{lift}_\eta:\overline{B_1}(\mathscr{H}_\eta)\to \mathcal{D}(\mathscr{H}_\eta):\ket{\psi} \mapsto \ket{\psi}\bra{\psi}\eta$.
\end{lemma}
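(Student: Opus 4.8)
The plan is to start from the defining property of $\operatorname{lift}_\eta$: it maps the vector representation $\ket{\psi}\in\mathscr{H}_\eta$ of a pure state $\omega\in\mathcal{S}$ to the density operator $\rho_\eta\in\mathcal{D}(\mathscr{H}_\eta)$ that represents the same $\omega$ under $\pi_\eta$. First I would fix such an $\omega$, so that by Definition~\ref{def:vecrep} we have $\omega(A)=\braket{\psi|\pi_\eta(A)|\psi}_\eta$ for all $A\in\mathcal{A}$, and then compute $\rho_\eta$ explicitly by rewriting this expectation value as a trace.

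The key manipulation uses the relation $\langle\bullet|\bullet\rangle_\eta=\langle\bullet|\eta|\bullet\rangle$ of Eq.~\eqref{eq:newip} to pass to the inner product of $\mathscr{H}$, giving $\omega(A)=\braket{\psi|\eta\,\pi_\eta(A)|\psi}$. Since $\braket{\psi|M|\psi}=\operatorname{tr}(\ket{\psi}\bra{\psi}M)$ for any bounded $M$, this yields $\omega(A)=\operatorname{tr}\!\big((\ket{\psi}\bra{\psi}\eta)\,\pi_\eta(A)\big)$ for all $A$. Comparing with the defining relation $\omega(A)=\operatorname{tr}(\rho_\eta\,\pi_\eta(A))$ of the density-operator representation under $\pi_\eta$, I would identify $\rho_\eta=\ket{\psi}\bra{\psi}\eta$ as a representing operator.

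It then remains to confirm that $\ket{\psi}\bra{\psi}\eta$ actually lies in $\mathcal{D}(\mathscr{H}_\eta)$, for which I would appeal to the earlier characterization $\rho_\eta\in\mathcal{D}(\mathscr{H}_\eta)\iff\rho_\eta=\eta^{-1/2}\rho\,\eta^{1/2}$ for some $\rho\in\mathcal{D}(\mathscr{H})$. Rewriting $\ket{\psi}\bra{\psi}\eta=\eta^{-1/2}\big(\eta^{1/2}\ket{\psi}\bra{\psi}\eta^{1/2}\big)\eta^{1/2}$ exhibits it in this form with $\rho=\ket{\chi}\bra{\chi}$, where $\ket{\chi}:=\eta^{1/2}\ket{\psi}$; this $\rho$ is positive and rank one, with $\operatorname{tr}\rho=\braket{\psi|\eta|\psi}=\braket{\psi|\psi}_\eta\le 1$ because $\ket{\psi}\in\overline{B_1}(\mathscr{H}_\eta)$, so $\rho\in\mathcal{D}(\mathscr{H})$ and hence $\ket{\psi}\bra{\psi}\eta\in\mathcal{D}(\mathscr{H}_\eta)$.

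Finally, since the representing density operator of a state under $\pi_\eta$ is unique—the trace functionals $\operatorname{tr}(\,\cdot\,\pi_\eta(A))$ separate density operators just as they do for $\pi$—the operator produced above is precisely $\operatorname{lift}_\eta(\ket{\psi})$, establishing the claimed action $\ket{\psi}\mapsto\ket{\psi}\bra{\psi}\eta$. I expect the only genuinely delicate point, beyond the routine trace algebra, to be the verification that $\ket{\psi}\bra{\psi}\eta\in\mathcal{D}(\mathscr{H}_\eta)$ together with the well-definedness of $\operatorname{lift}_\eta$; both are dispatched by the preceding lemmas and the subnormalization bound $\braket{\psi|\psi}_\eta\le1$.
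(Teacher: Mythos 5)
Your proof is correct and takes essentially the same route as the paper's: use the defining relation $\omega(A)=\braket{\psi|\pi_\eta(A)|\psi}_\eta$, rewrite it via Eq.~\eqref{eq:newip} as $\operatorname{tr}\left(\ket{\psi}\bra{\psi}\eta\,\pi_\eta(A)\right)$, and identify $\rho_\eta=\ket{\psi}\bra{\psi}\eta$. Your explicit verification that $\ket{\psi}\bra{\psi}\eta\in\mathcal{D}(\mathscr{H}_\eta)$ (via the characterization $\rho_\eta=\eta^{-1/2}\rho\,\eta^{1/2}$) and your appeal to uniqueness of the representing density operator are points the paper leaves implicit, so your write-up is, if anything, slightly more complete.
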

\begin{proof}
Let $\operatorname{lift}_\eta:\ket{\psi} \mapsto \rho_\eta$ and ${}^\#\pi_\eta:\rho_\eta \mapsto \omega$. 
Following the definition of ${}^\#\pi_\eta$ and  $\operatorname{lift}_\eta$,
\begin{equation}
    \braket{\psi|\pi_\eta(A)|\psi}_\eta = \text{tr}\left(\rho_\eta\pi_\eta(A)\right) = \omega(A) \;\; \forall A.
\end{equation}
As $\braket{\psi|\pi_\eta(I)|\psi}_\eta  = \omega(I)\le 1$, $\ket{\psi} \in \overline{B_1}(\mathscr{H}_\eta)$.
Using Eq.~\eqref{eq:newip}, we get $\braket{\psi|\pi_\eta(A)|\psi}_\eta = \braket{\psi|\eta\pi_\eta(A)|\psi}$. Then using
the cyclic property of the trace, this expectation value can be expressed as
\begin{equation}
    \braket{\psi|\eta\pi_\eta(A)|\psi} = \text{tr}\left(\ket{\psi}\bra{\psi}\eta\pi_\eta(A)\right) \;\; \forall A,
\end{equation}
therefore, $\rho_\eta = \ket{\psi}\bra{\psi}\eta \in \mathcal{D}(\mathscr{H}_\eta)$.
This leads to the desired action of $\operatorname{lift}_\eta$.

\end{proof}

\section{{{Quantum operation} for implementing the changing inner product}}\label{sec:channelEeta}
In this section, we construct the quantum operation that implements the change in inner product by $\eta \le \pi(I)$. 
Change in inner product is defined by the identity isomorphism $\mathcal{I}_\eta:\mathscr{H}\to\mathscr{H}_\eta$ 
(see Fig.~1c in main text).
We now show how  $\mathcal{I}_\eta$ is extended to $\mathcal{B}(\mathscr{H}_\eta)$ through the map $\mathcal{E}_\eta$ defined below.
\begin{theorem}
Let $\eta \le \pi(I)$ and $\mathcal{E}_\eta:\mathcal{B}(\mathscr{H})\to \mathcal{L}(\mathscr{H}_\eta):M \mapsto M\eta$.
Then
\begin{enumerate}
    \item $\operatorname{range}(\mathcal{E}_\eta) \subseteq \mathcal{B}(\mathscr{H}_\eta)$,
    \item $\mathcal{E}_\eta$ satisfies the following commutative diagram:
    \begin{equation}
    \xymatrix{
    \mathcal{D}(\mathscr{H}) \ar[r]^{\mathcal{E}_\eta} &  \mathcal{D}(\mathscr{H}_\eta)   \\
    \overline{B_1}(\mathscr{H}) \ar[u]^{\operatorname{lift}} \ar[r]^{\mathcal{I}_\eta} 
      & \overline{B_1}(\mathscr{H}_\eta) \ar[u]^{\operatorname{lift}_\eta} }
    \end{equation}
    \item $\mathcal{E}_\eta$ is a quantum operation.
\end{enumerate}
\end{theorem}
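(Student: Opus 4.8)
The plan is to take the three claims in order, since each rests on lemmas already in hand. For the first claim I would simply note that $\mathcal{E}_\eta(M)=M\eta$ is a product of two operators in $\mathcal{B}(\mathscr{H})$ and hence lies in $\mathcal{B}(\mathscr{H})$; by Lemma~\ref{lem:boundedopspaces} the sets $\mathcal{B}(\mathscr{H})$ and $\mathcal{B}(\mathscr{H}_\eta)$ coincide, so $M\eta\in\mathcal{B}(\mathscr{H}_\eta)$ and $\operatorname{range}(\mathcal{E}_\eta)\subseteq\mathcal{B}(\mathscr{H}_\eta)$. This is immediate and requires no calculation.

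For the commutativity of the diagram I would fix an arbitrary $\ket{\psi}\in\overline{B_1}(\mathscr{H})$ and compare the two paths. The one place the hypothesis $\eta\le\pi(I)$ is genuinely used is in checking that $\mathcal{I}_\eta$ sends the subnormalized ball of $\mathscr{H}$ into that of $\mathscr{H}_\eta$: from Eq.~\eqref{eq:newip}, $\braket{\psi|\psi}_\eta=\braket{\psi|\eta|\psi}\le\braket{\psi|\pi(I)|\psi}=\braket{\psi|\psi}\le 1$, so $\mathcal{I}_\eta\ket{\psi}=\ket{\psi}\in\overline{B_1}(\mathscr{H}_\eta)$ and the right-hand vertical map is applicable. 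The lower-then-right path then gives $\operatorname{lift}_\eta(\mathcal{I}_\eta\ket{\psi})=\ket{\psi}\bra{\psi}\eta$ by Lemma~\ref{lem:lifteta}, while the left-then-top path gives $\mathcal{E}_\eta(\operatorname{lift}\ket{\psi})=\mathcal{E}_\eta(\ket{\psi}\bra{\psi})=\ket{\psi}\bra{\psi}\eta$; the two agree, establishing commutativity.

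For the third claim I would prove $\mathcal{E}_\eta$ completely positive and trace non-increasing by exhibiting the factorization $\mathcal{E}_\eta=\mathcal{R}_\eta\circ\mathcal{G}_\eta$ of Eq.~\eqref{eq:F}, which is verified algebraically via $\mathcal{R}_\eta(\mathcal{G}_\eta(M))=\eta^{\nicefrac{-1}{2}}(\eta^{\nicefrac{1}{2}}M\eta^{\nicefrac{1}{2}})\eta^{\nicefrac{1}{2}}=M\eta$. The map $\mathcal{G}_\eta(M)=\eta^{\nicefrac{1}{2}}M\eta^{\nicefrac{1}{2}}$ has the single self-adjoint Kraus operator $\eta^{\nicefrac{1}{2}}$, hence is completely positive, and it is trace non-increasing because $(\eta^{\nicefrac{1}{2}})^\dagger\eta^{\nicefrac{1}{2}}=\eta\le\pi(I)$, equivalently $\operatorname{tr}(\eta^{\nicefrac{1}{2}}\rho\,\eta^{\nicefrac{1}{2}})=\operatorname{tr}(\rho\eta)\le\operatorname{tr}(\rho)$ for $\rho\ge 0$. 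The change of representation $\mathcal{R}_\eta(N)=RNR^{-1}$ is conjugation by $R=\eta^{\nicefrac{-1}{2}}$, which is unitary from $\mathscr{H}$ onto $\mathscr{H}_\eta$ (isometric by the defining relation of $\langle\,\vert\,\rangle_\eta$ and invertible), so $R^{-1}=R^\dagger$, $\mathcal{R}_\eta(N)=RNR^\dagger$ is completely positive, and it is trace preserving as already computed in Eq.~\eqref{eq:rho_etatrace}. Since a completely positive trace-preserving map composed with a completely positive trace-non-increasing map is completely positive and trace non-increasing, $\mathcal{E}_\eta$ is a channel.

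The hard part, conceptually, is that complete positivity and the adjoint are defined relative to the inner product, and $\mathcal{E}_\eta$ crosses from $\mathscr{H}$ to $\mathscr{H}_\eta$. The factorization is precisely what neutralizes this obstacle: $\mathcal{G}_\eta$ stays within $\mathscr{H}$, where the single-Kraus argument is standard, and $\mathcal{R}_\eta$ is a genuine unitary $*$-isomorphism between the two operator algebras, so it preserves positivity and trace automatically. The one subtlety I would be careful to flag is that the trace on $\mathscr{H}_\eta$ is computed in an orthonormal basis of $\mathscr{H}_\eta$, and its agreement with the trace transported through $\mathcal{R}_\eta$ is exactly the content of Eq.~\eqref{eq:rho_etatrace}.
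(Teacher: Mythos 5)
Your proof is correct, and on the substantive claim (Statement 3) it takes a genuinely different route from the paper's. For Statements 1 and 2 you do essentially what the paper does: boundedness of $M\eta$ in $\mathcal{B}(\mathscr{H}_\eta)$ (the paper computes $\|M\eta\|_\eta \le \|\eta^{\nicefrac{1}{2}}\|^2\|M\| < \infty$ directly; you invoke Lemma~\ref{lem:boundedopspaces}, whose proof is that same computation), and commutativity of the diagram via Lemma~\ref{lem:lifteta}; your extra check that $\eta \le \pi(I)$ forces $\mathcal{I}_\eta$ to map $\overline{B_1}(\mathscr{H})$ into $\overline{B_1}(\mathscr{H}_\eta)$ is a well-definedness point the paper leaves implicit. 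For Statement 3 the paper works directly in $\mathcal{B}(\mathscr{H}_\eta)$ with the $\ddagger$-adjoint: it writes $M = AA^\dagger$ and certifies positivity of $\mathcal{E}_\eta(M) = BB^\ddagger$ with $B = A\eta^{\nicefrac{1}{2}}$, repeats the argument with $D = C(\eta^{\nicefrac{1}{2}}\otimes I_k)$ to get positivity of $\mathcal{E}_\eta\otimes\mathscr{I}_k$ for every $k$, and gets trace non-increase from the trace-norm bound $\operatorname{tr}|\eta^{\nicefrac{1}{2}}\rho\,\eta^{\nicefrac{1}{2}}| \le \|\eta^{\nicefrac{1}{2}}\|^2\operatorname{tr}|\rho|$. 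You instead verify the factorization $\mathcal{E}_\eta = \mathcal{R}_\eta\circ\mathcal{G}_\eta$ of Eq.~\eqref{eq:F} by a one-line computation and push the channel properties through it; this is not circular, since the main text asserts Eq.~\eqref{eq:F} without proof and your algebra supplies it. What your route buys is modularity: $\mathcal{G}_\eta$ lives entirely inside $\mathcal{B}(\mathscr{H})$, where the single-Kraus argument and the criterion $(\eta^{\nicefrac{1}{2}})^\dagger\eta^{\nicefrac{1}{2}} = \eta \le \pi(I)$ are textbook facts, and all the difficulty of crossing between inner products is isolated in $\mathcal{R}_\eta$, which is conjugation by the surjective isometry $R = \eta^{\nicefrac{-1}{2}}$ and is therefore completely positive and trace preserving (the trace statement being exactly Eq.~\eqref{eq:rho_etatrace}, as you note). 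What the paper's direct argument buys is self-containedness: it never needs the isometry structure of $R$, the adjoint of a map between two different Hilbert spaces, or composition rules for channels between distinct operator algebras. The one step in your write-up that deserves an explicit line is the claim $R^{-1} = R^\dagger$: here $\dagger$ must mean the adjoint of $R:\mathscr{H}\to\mathscr{H}_\eta$ taken with the $\eta$-inner product on the codomain, and the verification $\braket{\phi|R\psi}_\eta = \braket{\phi|\eta\,\eta^{\nicefrac{-1}{2}}\psi} = \braket{\eta^{\nicefrac{1}{2}}\phi|\psi}$ shows this adjoint is $\eta^{\nicefrac{1}{2}} = R^{-1}$, which is what makes $\mathcal{R}_\eta(N) = RNR^{-1}$ a legitimate Kraus form and closes the argument.
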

\begin{proof}
To prove Statement 1, note that for any $M \in \mathcal{B}\left(\mathscr{H}\right)$, the operator $M\eta \in  \mathcal{B}\left(\mathscr{H}_\eta\right)$ because
\begin{equation}
    \|M\eta\|_\eta = \|\eta^{\nicefrac{1}{2}}\left(M\eta\right)\eta^{\nicefrac{-1}{2}}\|\leq \|\eta^{\nicefrac{1}{2}}\|^2\cdot\|M\|\le \infty,
\end{equation}
where the first equality follows from Eq.~\eqref{eq:twonorms}. 
The commutative diagram in Statement 2 follows immediately from the action of lift${}_\eta$ map
in Lemma \ref{lem:lifteta}. 

We now show that $\mathcal{E}_\eta$ is a valid quantum operation,
i.e.\ a completely-positive, trace non-increasing map.
To prove the positivity of $\mathcal{E}_\eta$,
let $M \ge 0$, so that it can be expressed as $M = A A^\dagger$~\cite{Con07}. 
Then 
$\mathcal{E}_\eta(M) = A A^\dagger \eta$, which can expressed as
$\mathcal{E}_\eta(M) = BB^\ddagger$ with $B = A\eta^{\frac12}$. 
Therefore $\mathcal{E}_\eta(M) \in \mathcal{B}(\mathscr{H}_\eta)$  
is positive if $M$ is positive, which proves the positivity of $\mathcal{E}_\eta$.

Complete positivity of $\mathcal{E}_\eta$ can be proven by showing that the map 
$\mathcal{E}_\eta \otimes \mathscr{I}_k:\mathcal{B}\left(\mathscr{H}\right)\otimes\mathcal{B}(\mathbb{C}^k) \to\mathcal{B}\left(\mathscr{H}_\eta\right)\otimes\mathcal{B}(\mathbb{C}^k)  $ 
is positive,
for every positive integer $k$,
where $\mathscr{I}_k$ denotes the identity map on $\mathcal{B}(\mathbb{C}^k)$. 
The action of the new map is given by
$\left[\mathcal{E}_\eta \otimes \mathscr{I}_k\right]\left(N\right)= N \left(\eta \otimes I_k\right)$,
with $I_k \in \mathcal{B}(\mathbb{C}^k)$
the identity operator. 
Operator 
$\left[\mathcal{E}_\eta \otimes \mathscr{I}_k\right](N)\in\mathcal{B}\left(\mathscr{H}_\eta\right)\otimes\mathcal{B}(\mathbb{C}^k) $
because $\|\eta \otimes I_k\| = \|\eta\|_\eta\cdot\|I_k\| = \|\eta\|_\eta$. 
For proving positivity,
let $N\in \mathcal{B}(\mathscr{H})\otimes \mathcal{B}(\mathbb{C}^k)$ be a positive operator,
so that $N = C C^\dagger$. Then $\left[\mathcal{E}_\eta \otimes \mathscr{I}_k\right](N) = C C^\dagger (\eta \otimes I_k)$, 
which can be expressed as $\left[\mathcal{E}_\eta \otimes \mathscr{I}_k\right] (N) =D D^\ddagger$ with
$D = C(\eta^{\frac12}\otimes I_k)$,
thereby proving positivity of $\left[\mathcal{E}_\eta \otimes \mathscr{I}_k\right](N)$ and consequently positivity of 
$\mathcal{E}_\eta \otimes \mathscr{I}_k$.

To prove that $\mathcal{E}_\eta$ is trace non-increasing, let $\rho \in \mathcal{D}(\mathscr{H})$
and note that trace is independent of the inner product (Eq.~\eqref{eq:rho_etatrace}).
We have $\text{tr}(\rho\eta) = \text{tr}(\eta^{\nicefrac{1}{2}}\rho\eta^{\nicefrac{1}{2}})$
and 
\begin{equation}\label{eq:tracedecreasing}
    \text{tr}(\eta^{\nicefrac{1}{2}}\rho\eta^{\nicefrac{1}{2}}) = 
    \text{tr}|\eta^{\nicefrac{1}{2}}\rho\eta^{\nicefrac{1}{2}}|
    \le \|\eta^{\nicefrac{1}{2}}\|^2\text{tr}|\rho| \le \text{tr}(\rho),
\end{equation}
where $|M| = \sqrt{\left(M^\dagger M\right)}$ and we used $|M| = M$ for any $M \ge 0$. 
The first inequality in Eq.~\eqref{eq:tracedecreasing} is a property of the trace norm~\cite{Con07}, and the last inequality in Eq.~\eqref{eq:tracedecreasing} follows from the fact that $\eta\leq \pi(I)$ and therefore, $\left\|\eta^{\nicefrac{1}{2}}\right\|^2\leq 1$.

\end{proof}

\section{Transformation of the Operators under Changing the Inner Product}\label{sec:commutation}
An inner product changing channel could modify the 
commutation relations between the operators. 
In this section, we demonstrate such a change
with an explicit example of a qubit system undergoing an inner product change.
Consider a qubit system undergoing change in inner product by
\begin{equation}
    \eta = \frac{1}{1 + r\sin\phi} \begin{pmatrix}
1 & -\text{i}r\sin\phi \\
\text{i}r\sin\phi & 1
 \end{pmatrix}, \quad 0\le r<1.
\end{equation}
The Pauli operators $X,Y,Z \in \mathcal{B}(\mathscr{H})$ acting on the original
Hilbert space along with the identity operator $I_2\in \mathcal{B}(\mathscr{H})$ generate the $\mathfrak{u}(2)$
algebra.
These operators transform according to Eq.~(5) in the main text
following the inner product change by $\eta$.
This transformation is given by the map $\mathcal{E}^{\rm op}_\eta$.
The transformed operators satisfy the commutation relations
\begin{align}
    &\left[\mathcal{E}^{\rm op}_\eta\left(X\right),\mathcal{E}^{\rm op}_\eta\left(Y\right)\right] =  2\text{i}a\ \mathcal{E}^{\rm op}_\eta\left(Z\right),\nonumber \\ 
    &\left[ \mathcal{E}^{\rm op}_\eta\left(I_2\right), \mathcal{E}^{\rm op}_\eta\left(Z\right) \right] = 2\text{i}(1-a)\  \mathcal{E}^{\rm op}_\eta\left(X\right), \nonumber \\ 
    & \left[\mathcal{E}^{\rm op}_\eta\left(Y\right),\mathcal{E}^{\rm op}_\eta\left(Z\right)\right] =  2\text{i}a\ \mathcal{E}^{\rm op}_\eta\left(X\right),\nonumber \\ 
    &\left[ \mathcal{E}^{\rm op}_\eta\left(I_2\right), \mathcal{E}^{\rm op}_\eta\left(X\right) \right] = -2\text{i}(1-a)\ \mathcal{E}^{\rm op}_\eta\left(Z\right), 
    \nonumber \\
    &\left[\mathcal{E}^{\rm op}_\eta\left(Z\right),\mathcal{E}^{\rm op}_\eta\left(X\right)\right] =  -2\text{i}(1-a)\ \mathcal{E}^{\rm op}_\eta\left(I_2\right)+2\text{i}a\mathcal{E}^{\rm op}_\eta\left(Y\right),\nonumber \\ 
    &\left[\mathcal{E}^{\rm op}_\eta\left(I_2\right),\mathcal{E}^{\rm op}_\eta\left(Y\right)\right] = \bm{0},
\end{align}
where $a = 1/(1+r\sin \phi)$. These commutation relations are different from 
those of $\mathfrak{u}(2)$ algebra for $r \ne 0$, or equivalently $a \ne 1$.

\section{{Matrix representation of the qutrit unitary operator that simulates
 change in inner product of a qubit system}}\label{sec:matrixUeta}
In this section, we derive the matrix representation of the qutrit unitary operator $U_{\tilde{\eta}}$
(see Eq.~(12) in main text) employed in the 
simulation of the change in inner product of a qubit system. Equation~(10) in the main text
requires that $PU_{\tilde{\eta}}P = \tilde{\eta}^{\nicefrac{1}{2}}$, so that
$U_{\tilde{\eta}}$ can be expressed as
\begin{equation}\label{eq:Utildeeta}
    [U_{\tilde{\eta}}] = \begin{pmatrix}
    \left[\tilde{\eta}\right]^{\frac12} & \bm{u} \\ {\bar{\bm{v}}}^\top &r\text{e}^{\text{i}\theta} 
    \end{pmatrix}
\end{equation}
for some vectors $\bm{u},\bm{v} \in \mathbb{C}^2$, a number $r \in [0,1]$ and a phase $\theta \in [0,2\pi)$.
The unitarity conditions $U_{\tilde{\eta}}^\dagger U_{\tilde{\eta}} = U_{\tilde{\eta}} U_{\tilde{\eta}}^\dagger = I_3$ lead to
\begin{align}
    \left[\tilde{\eta}\right] + {\bm{u}}\bar{\bm{u}}^\top  = I_2,
    \label{eq:unitary1}\\
    \left[\tilde{\eta}\right]^{\frac12}\bm{u} + r\text{e}^{{\rm i}\theta}\bm{v}=0,\label{eq:unitary2}\\   
    \bar{\bm{u}}^\top{\bm{u}}+r^2 =  1 \implies \|\bm{u}\| = 1-r^2 \label{eq:unitary3}.
\end{align}
Postmultiplying Eq.~\eqref{eq:unitary1} by $\bm{u}$ and substituting Eq.~\eqref{eq:unitary3} yields
$\left[\tilde{\eta}\right]\bm{u} = r^2\bm{u}$, which implies that $\bm{u}$ is the eigenvector
of $\left[\tilde{\eta}\right]^{\nicefrac{1}{2}}$ with eigenvalue $r$. Then Eq.~\eqref{eq:unitary2}
yields $\bm{v} = -\text{e}^{-{\rm i}\theta}\bm{u}$ as desired, 
with the global phase of $\bm{u}$ and $\theta$ being the free parameters.

\section{{Simulation of a qubit PT-symmetric Hamiltonian using single qutrit}}\label{sec:qubitPTsymmetry}

We now design a qutrit procedure that simulates the dynamics of a qubit Hamiltonian with unbroken PT symmetry.
Our design is based on the qutrit procedure for simulating the change in inner product of a qubit system, 
provided in the main text  (Fig.~4). 
We illustrate our Hamiltonian-simulation procedure using the PT-symmetric Hamiltonian $H_{\rm PT}$ from~\cite{BBJ02}. 

The matrix form of  $H_{\rm PT}$ is
\begin{equation}
\label{eq:qubitHPT}
    \left[H_{\rm PT}\right] = \begin{pmatrix}
    r\text{e}^{{\rm i}\phi} & s \\ s & r\text{e}^{-{\rm i}\phi}\end{pmatrix}, \quad s>r\sin\phi\ge 0.
\end{equation}
Dynamics generated by $H_{\rm PT}$ is denoted by the operator $\mathcal{U}_{\rm PT}$ (Eq.~(6) in main text),
\begin{equation}
    \rho \stackrel{\mathcal{U}_{\rm PT}}{\mapsto} \kappa U_{\rm{PT}}\rho U^\dagger_{\rm{PT}}, \; U_{\rm{PT}} := \text{e}^{-\text{i}H_{\rm {PT}}t/\hbar},\, \kappa = \frac{1}{\|\eta_2^{-1}\|}.
\end{equation}
As proved in the main text, $\mathcal{U}_{\rm PT}$ can be expressed as a sequence of operations acting exclusively on 
$ \mathcal{B}(\mathscr{H}_2)$,
\begin{equation}~\label{eq:simulatingUPT}
     \mathcal{U}_{\rm{PT}}= {\mathcal{G}}_{\kappa \eta^{-1}_2} \circ (\mathcal{R}_{\kappa\eta^{-1}_2} \circ \widetilde{\mathcal{U}}_{\rm{PT}} \circ \mathcal{R}_{\eta_2})\circ \mathcal{G}_{\eta_2},
\end{equation}
where
$\mathcal{R}_{\kappa\eta^{-1}} \circ \widetilde{\mathcal{U}}_{\rm{PT}} \circ \mathcal{R}_{\eta}: \mathcal{B}(\mathscr{H}_2)\to \mathcal{B}(\mathscr{H}_2)$
is the channel with unitary Kraus operator $\eta^{\nicefrac{1}{2}}U_{\rm PT}\eta^{\nicefrac{-1}{2}}$ and
the Kraus operator is generated by self-adjoint Hamiltonian $ h_{\rm PT} = \eta^{\nicefrac{1}{2}}H_{\rm PT}\eta^{\nicefrac{-1}{2}}$.

Our qutrit procedure for simulating $\mathcal{U}_{\rm PT}$ involves implementing each operation in Eq.~\eqref{eq:simulatingUPT}
using qutrit unitaries and measurements, as we now explain through Steps 1-4.
The input to the simulation procedure is $\rho \in \mathcal{B}(\mathscr{H}_2)$ 
embedded as $\sigma:= \rho\oplus \bm{0} \in \mathcal{B}(\mathscr{H}_3)$ and a time $t>0$.
The output of the procedure is the state $\frac{U_{\rm{PT}}\rho U^\dagger_{\rm{PT}}}{{\rm tr}\left( U_{\rm{PT}}\rho U^\dagger_{\rm{PT}}\right)}$
with probability $\frac{1}{\|\eta_2^{-1}\|}{\rm tr}\left( U_{\rm{PT}}\rho U^\dagger_{\rm{PT}}\right).$
The simulation steps are
\begin{enumerate}
    \item Calculate the metric operator and its inverse:
    The agent calculates $\eta_2$, $\eta_2^{-1}$ satisfying the quasi-Hermiticity condition $H^\dagger_{\rm PT} = \eta_2H_{\rm PT}\eta^{-1}_2$.
    A choice of $\eta_2$ and, therefore $\eta_2^{-1}$, is
\begin{equation}
    \left[\eta_2\right] = \frac{1}{s+r\sin\phi}\begin{pmatrix}
    s & -{\rm i}r\sin\phi \\{\rm i}r\sin\phi & s \end{pmatrix},
\end{equation}
\begin{equation}
    \left[\eta_2^{-1}\right] = \frac{1}{s-r\sin\phi}\begin{pmatrix}
    s & {\rm i}r\sin\phi \\-{\rm i}r\sin\phi & s \end{pmatrix},
\end{equation}
with $\|\eta_2\|=1$ and $\|\eta_2^{-1}\| =(s+r\sin\phi)/(s-r\sin\phi)$.

\item Simulate change in inner product by $\eta_2$: 
Agent implements the qutrit procedure (Fig.~4) to simulate $\mathcal{G}_{{\eta}_2}$ 
by setting $\tilde{\eta} = \eta_2$ and for a single copy of $\sigma$.
A choice of the qutrit unitary~$U_{{\eta}_2}$ (see Eq.~\eqref{eq:Utildeeta})
simulating the action of  $\mathcal{G}_{\eta_2}$ is
\begin{equation}\label{eq:Ueta2}
    U_{{\eta}_2} = \begin{pmatrix}
    (1+q)/2 & -{\rm i}(1-q)/2 & p \\
    {\rm i}(1-q)/2 & (1+q)/2 & -{\rm i}p \\
    -p & -{\rm i}p & q
    \end{pmatrix},
\end{equation}
where
\begin{equation}
    \quad q = \sqrt{\frac{s-r\sin\phi}{s+r\sin\phi}} = \frac{1}{\sqrt{\|\eta_2^{-1}\|}}, \  p = \sqrt{\frac{r\sin\phi}{s+r\sin\phi}}.
\end{equation}
The output of this step is the qutrit state 
$\frac{\eta_2^{\nicefrac{1}{2}}\rho\eta_2^{\nicefrac{1}{2}}}{{\rm tr}\left(\eta_2^{\nicefrac{1}{2}}\rho\eta_2^{\nicefrac{1}{2}}\right)}\oplus\bm{0} $
with probability ${\rm tr}\left(\eta_2^{\nicefrac{1}{2}}\rho\eta_2^{\nicefrac{1}{2}}\right)$.

\item Simulate the unitary evolution generated by $h_{\rm PT}$:
Agent calculates $h_{\rm PT}$ embedded in $\mathcal{B}(\mathscr{H}_3)$,
\begin{equation}
    \qquad\; [h_{\rm PT} \oplus \bm{0}] = \begin{pmatrix}    
    r\cos \phi & \sqrt{s^2 - r^2\sin^2\phi} & 0\\
    \sqrt{s^2 - r^2\sin^2\phi} & r\cos \phi & 0\\
    0 & 0 & 0\end{pmatrix},
\end{equation}
and implements the qutrit unitary operator $e^{-{\rm i}(h_{\rm PT}\oplus \bm{0})t}$,
which is equivalent to simulating the channel $
(\mathcal{R}_{\kappa\eta^{-1}_2} \circ \widetilde{\mathcal{U}}_{\rm{PT}} \circ \mathcal{R}_{\eta_2})$
in Eq.~\eqref{eq:simulatingUPT}. 
The output of this deterministic step is the qutrit state $\left(e^{-{\rm i}h_{\rm PT}t} \frac{\left(\eta_2^{\nicefrac{1}{2}}\rho\eta_2^{\nicefrac{1}{2}}\right)}
{{\rm tr}\left(\eta_2^{\nicefrac{1}{2}}\rho\eta_2^{\nicefrac{1}{2}}\right)}e^{{\rm i}h_{\rm PT}t}\right)\oplus\bm{0} $, 
provided Step 2 is successful.
\item Simulate change in inner product by $\kappa\eta_2^{-1}$:
Agent applies the qutrit procedure (Fig.~4)
to simulate $\mathcal{G}_{\kappa\eta_2^{-1}}$,
by setting $\tilde{\eta} = \kappa\eta^{-1}_2$.
Note that we have $\kappa = \frac{1}{\|\eta_2^{-1}\|} = q^2$ (Eqs.~\eqref{eq:simulatingUPT},~\eqref{eq:Ueta2}).
A choice of $U_{\kappa\eta^{-1}_2} $ is
\begin{equation}
    U_{\kappa{\eta}_2^{-1}} = \begin{pmatrix}
    (1+q)/2 & {\rm i}(1-q)/2 & p \\
    -{\rm i}(1-q)/2 & (1+q)/2 & {\rm i}p \\
    -p & {\rm i}p & q
    \end{pmatrix}.
\end{equation}
The output of this procedure is the qutrit state 
$\frac{\left(\eta^{\nicefrac{-1}{2}}_2e^{-{\rm i}h_{\rm PT}t}\eta_2^{\nicefrac{1}{2}}\rho\eta_2^{\nicefrac{1}{2}}e^{{\rm i}h_{\rm PT}t}\eta^{\nicefrac{-1}{2}}_2\right)}
{{\rm tr}\left(\eta^{\nicefrac{-1}{2}}_2e^{-{\rm i}h_{\rm PT}t}\eta_2^{\nicefrac{1}{2}}\rho\eta_2^{\nicefrac{1}{2}}e^{{\rm i}h_{\rm PT}t}\eta^{\nicefrac{-1}{2}}_2\right)}\oplus\bm{0}  = \frac{U_{\rm PT} \rho U_{\rm PT}^\dagger}{{\rm tr}\left( U_{\rm PT} \rho U_{\rm PT}^\dagger\right)} \oplus \bm{0}$ 
with probability  $\frac{{\rm tr}\left( U_{\rm PT} \rho U_{\rm PT}^\dagger\right)}{\|\eta_2^{-1}\|{\rm tr} \left(\eta_2^{\nicefrac{1}{2}}\rho\eta_2^{\nicefrac{1}{2}}\right)}$.
\end{enumerate}
Therefore, the output of the simulation procedure is the state 
$\frac{U_{\rm PT} \rho U_{\rm PT}^\dagger}{{\rm tr}\left( U_{\rm PT} \rho U_{\rm PT}^\dagger\right)} \oplus \bm{0}$
with success probability given by the combined probability of success in Steps 2,4, which is equal to
$\frac{1}{\|\eta_2^{-1}\|}{{\rm tr}\left( U_{\rm PT} \rho U_{\rm PT}^\dagger\right)}$.

\section{Simulation of change in inner product and PT-symmetric dynamics 
of a \texorpdfstring{$d$}{d}-dimensional system}~\label{sec:dsimulation}
We first explain a simulation procedure
to change the inner product of a $d$-dimensional system using $2d$ dimensions.
We assume that the algebra $\mathcal{A}$ of the system is represented on a $d$-dimensional Hilbert space $\mathscr{H}_d^{(s)}$ by $\pi$.
Similar to the qutrit simulation procedure explained in main text,
the agent simulating $\mathcal{G}_\eta$ for $\eta\leq\pi(I)$ first 
constructs the metric operator $\tilde{\eta} =\frac{1}{\|\eta\|}\eta$
and the unitary operator $U_{\tilde{\eta}}\in\mathcal{B}(\mathscr{H}_d^{(s)} \oplus \mathscr{H}_d^{(a)} )$
satisfying 
\begin{equation}\label{eq:Getad}
    \mathcal{G}_{\tilde{\eta}}(\rho)\oplus \bm{0} = PU_{\tilde{\eta}}\sigma U_{\tilde{\eta}}^\dagger P,\, \sigma:=\rho\oplus \bm{0},
    \;\forall \rho\in \mathcal{B}(\mathscr{H}_d^{(s)}),
\end{equation}
where $\bm{0}$ denotes the zero operator in $\mathcal{B}(\mathscr{H}_d^{(a)})$.
The matrix representation of a choice of $U_{\tilde{\eta}}$ is
\begin{equation}
\label{eq:Uetad}
    \left[U_{\tilde{\eta}}\right]
    = \begin{pmatrix}
    \left[\tilde{\eta}\right]^{\frac12} & \left[1-\tilde{\eta}\right]^{\frac12} \\ \left[1-\tilde{\eta}\right]^{\frac12} & - \left[\tilde{\eta}\right]^{\frac12}
    \end{pmatrix}.
\end{equation}
Agent then implements  $U_{\tilde{\eta}}$
followed by projective measurement and postselection on to the subspace $\mathscr{H}_d^{(s)}$.
All steps of the simulation procedure are similar to the qutrit simulation procedure for changing inner product
explained in the main text.

We now discuss how this simulation procedure for changing the inner product 
can be used for simulating PT-symmetric dynamics in $d$ dimensions
using a $2d$-dimensional system. Similar to the $d=2$ case discussed in Sec.\,\ref{sec:qubitPTsymmetry}, 
the input to the simulation procedure is $\rho \in \mathcal{B}(\mathscr{H}_d^{(s)})$ 
embedded as $\sigma:= \rho\oplus \bm{0} \in \mathcal{B}(\mathscr{H}_d^{(s)}\oplus\mathscr{H}_d^{(a)})$ and a time $t>0$.
The simulation steps are as follows:
\begin{enumerate}
    \item The agent calculates $\eta$ satisfying the quasi-Hermiticity condition $H^\dagger_{\rm PT} = \eta H_{\rm PT}\eta^{-1}$ with $\|\eta\|=1$.
    \item The agent implements the procedure described above to simulate $\mathcal{G}_{\eta}$ 
    by setting $\tilde{\eta} = \eta$ and
    for a single copy of $\sigma$ (Eq.~\eqref{eq:Getad}). 
    \item The agent calculates $h_{\rm PT} = \eta^{\nicefrac{1}{2}}H_{\rm PT}\eta^{-\nicefrac{1}{2}}$ 
    embedded in $\mathcal{B}(\mathscr{H}_d^{(s)}\oplus\mathscr{H}_d^{(a)})$,
    and implements the unitary operator $e^{-{\rm i}(h_{\rm PT}\oplus \bm{0})t}$.
    \item The agent applies the procedure described above to simulate $\mathcal{G}_{\kappa\eta^{-1}}$, 
    by setting $\tilde{\eta} = \kappa\eta^{-1}$ such that $\kappa = \frac{1}{\|\eta^{-1}\|}$ (Eq.~\eqref{eq:Getad}).
\end{enumerate}
The output of this procedure is the state $\frac{U_{\rm{PT}}\rho U^\dagger_{\rm{PT}}}{{\rm tr}\left( U_{\rm{PT}}\rho U^\dagger_{\rm{PT}}\right)}$
with probability $\frac{1}{\|\eta_2^{-1}\|}{\rm tr}\left( U_{\rm{PT}}\rho U^\dagger_{\rm{PT}}\right).$

\section{{Additional details on the verification scheme}}\label{sec:threshold}
We now prove a threshold distance $D_{\rm th}$ for the tomographic verification scheme, for the 
qutrit procedure simulating the change in inner product by an arbitrary $\eta$, provided in the main text.
The scheme allows
a verifier to distinguish an honest prover implementing the operation $\mathcal{G}_\eta$ 
from a dishonest prover failing to implement the same.
We assume that the dishonest prover implements only unitary operations, on the qubit subspace, drawn from the set  
$\{U_j \oplus \bm{1}: U_j\in\mathcal{B}\left(\mathscr{H}_2\right) \}$, where each $U_j$ is selected with probability $p_j$, 
and the system is discarded with probability $p:=1-\sum_jp_j<1$.
The quantum operation implemented by the dishonest prover is given by 
\begin{equation}
    \hat{\mathcal{G}}_\eta(\bullet) = \sum_jp_j \left(U_j \oplus \bm{1}\right)^\dagger\bullet \left(U_j \oplus \bm{1}\right).
\end{equation}
We now derive a lower bound for the induced Schatten $(1\to 1)$-norm distance~\cite{Pau03} between the 
inner-product changing operation~$\mathcal{G}_\eta\oplus \bm{0}$  
and the implemented operation $\hat{\mathcal{G}}_\eta$.
Note that 
\begin{equation}
    \|\mathcal{G}_\eta\oplus\bm{0}-\hat{\mathcal{G}}_\eta\|_{1\to 1} = \max_{T\in\mathcal{B}\left(\mathscr{H}_3\right)}\frac{  \|\mathcal{G}_\eta\oplus\bm{1}(T)-\hat{\mathcal{G}}_\eta(T)\|_{\rm tr}}{\|T\|_{\rm tr}},
\end{equation}
 where  $\|T\|_{\rm tr} = {\rm tr}\left(\sqrt{T^\dagger T}\right)$.
 For $T = \frac{1}{3}I_3$, $\|T\|_{\rm tr} = 1$. 
 Therefore,
 \begin{align}
      \|\mathcal{G}_\eta\oplus\bm{0}-\hat{\mathcal{G}}_\eta\|_{1\to 1} 
      &\ge 
      \frac{1}{3}\|\mathcal{G}_\eta\oplus \bm{0}(I_3)-\hat{\mathcal{G}}_\eta(I_3)\|_{\rm tr} \nonumber\\
      &= 
      \frac{1}{3}\|\eta\oplus\bm{1}-\sum_jp_j(U_j\oplus\bm{1})^\dagger I_3(U_j\oplus\bm{1})\|_{\rm tr} \nonumber\\
      &= 
       \frac{1}{3}\|\eta\oplus\bm{1}-(1-p)I_3\|_{\rm tr}.
 \end{align}
 We assume that the eigenvalues of $\eta$ are denoted by $\lambda_1,\lambda_2$.
 The eigenvalues satisfy  $1\geq\lambda_1>\lambda_2>0$ for any non-trivial $\eta$, i.e.\ $\eta\neq I_3$.
 The trace distance $\|\eta\oplus\bm{1}-(1-p)I_3\|_{\rm tr} = \vert \lambda_1-(1-p)\vert + \vert\lambda_2-(1-p)\vert$.
 For any $p\in[0,1)$, it can be further verified that $\|\eta\oplus\bm{1}-(1-p)I_3\|_{\rm tr}\geq {(\lambda_1-\lambda_2)}.$
 Therefore, 
 \begin{equation}
     D_{\rm th} :=  \frac{(\lambda_1-\lambda_2)}{3} \leq \|\mathcal{G}_\eta\oplus\bm{0}-\hat{\mathcal{G}}_\eta\|_{1\to 1}.
 \end{equation}
  The above given value for $D_{\rm th}$ allows the verifier to distinguish an honest prover from a dishonest one, provided the honest prover 
  implements the operation $\mathcal{G}_{\eta}$ with error less than $D_{\rm th}$, where error is quantified by the induced Schatten $(1\to 1)$-norm.

\end{appendix}
\bibliography{ref.bib}
\end{document}